\documentclass[12pt, draftclsnofoot, onecolumn]{IEEEtran}

\hyphenation{op-tical net-works semi-conduc-tor}
\usepackage{CJK}
\usepackage{color, soul, framed}
\usepackage{algorithm}
\usepackage{algorithmic}
\usepackage{amsmath}
\usepackage{amssymb}
\usepackage{psfrag}
\usepackage{graphicx}
\usepackage{epstopdf}
\usepackage{multirow}
\usepackage{stfloats}
\usepackage{cite}
\usepackage{color}
\usepackage[normalem]{ulem}
\usepackage{arydshln}
\usepackage{textcomp}
\usepackage{enumerate}
\usepackage{bbm}
\usepackage{xcolor}
\usepackage{xpatch}
\usepackage{array}
\usepackage{makecell}
\usepackage{multirow}
\usepackage{diagbox}
\usepackage{subfigure}

\hyphenation{op-tical net-works semi-conduc-tor}
\hyphenation{op-tical net-works semi-conduc-tor}

\newtheorem{mypro}{Proposition}
\newtheorem{remark}{\bf{Remark}}

\makeatletter


\begin{document}
\title{Joint Beamforming and Reconfigurable Intelligent Surface Design for Two-Way Relay Networks}
\author{Jun Wang, Ying-Chang Liang, \emph{Fellow, IEEE}, Jingon Joung, \emph{Senior Member, IEEE}, Xiaojun Yuan, \emph{Senior Member, IEEE}, and Xinguo Wang
\thanks{
Part of this work was presented in IEEE Globecom 2020\cite{wang2020joint}. This work has been submitted to the IEEE for possible publication. Copyright may be transferred without notice, after which this version may no longer be accessible.
\newline
\indent J. Wang is with the National Key Laboratory of Science and Technology on Communications, and the Center for Intelligent Networking and Communications (CINC), University of Electronic Science and Technology of China (UESTC), Chengdu 611731, China (e-mail: junwang@std.uestc.edu.cn). Y.-C. Liang and X.-J. Yuan are with the Center for Intelligent Networking and Communications (CINC), University of Electronic Science and Technology of China (UESTC), Chengdu 611731, China (e-mail: liangyc@ieee.org and xjyuan@uestc.edu.cn). J. Joung is with the School of Electrical and Electronics Engineering, Chung-Ang University, Seoul 06974, South Korea (e-mail: jgjoung@au.ac.kr). X.-G. Wang is with the School of Computer Science, Chengdu University of Information Technology, Chengdu 610025, China, and also with the Center for Intelligent Networking and Communications (CINC), and National Key Laboratory of Science and Technology on Communications, University of Electronic Science and Technology of China, Chengdu, China (e-mail: xinguowang911@163.com).}}

 \maketitle

\vspace{-1cm}
\begin{abstract}
In this paper, we consider a reconfigurable intelligent surface (RIS)-assisted two-way relay network, in which two users exchange information through the base station (BS) with the help of an RIS. By jointly designing the phase shifts at the RIS and beamforming matrix at the BS, our objective is to maximize the minimum signal-to-noise ratio (SNR) of the two users, under the transmit power constraint at the BS. We first consider the single-antenna BS case, and propose two algorithms to design the RIS phase shifts and the BS power amplification parameter, namely the SNR-upper-bound-maximization (SUM) method, and genetic-SNR-maximization (GSM) method. When there are multiple antennas at the BS, the optimization problem can be approximately addressed by successively solving two decoupled subproblems, one to optimize the RIS phase shifts, the other to optimize the BS beamforming matrix. The first subproblem can be solved by using SUM or GSM method, while the second subproblem can be solved by using optimized beamforming or maximum-ratio-beamforming method. The proposed algorithms have been verified through numerical results with computational complexity analysis.
\end{abstract}
\begin{keywords}
Two-way relay network, reconfigurable intelligent surface, genetic algorithm.
\end{keywords}

\section{Introduction}
The \emph{sixth generation} (6G) mobile networks are expected to support peak data rate of terabits per second and millions of wireless connections per square kilometer~\cite{you2021towards,liang2020symbiotic,zhang20196g}. The exponential growth of the wireless traffic and communication device thus call for novel spectral- and energy-efficient technologies for future wireless communications~\cite{liang2020dynamic}. Recently, \emph{reconfigurable intelligent surface} (RIS), also known as an intelligent reflecting surface, has become a promising technique to help to fulfill these requirements~\cite{liang2019large,gong2019towards,8796365}. The rise of the RIS technique is closely related to the fast development of the meta-materials and the fabrication technology. Through intelligently adjusting the phase shifts of RIS, the wireless channels become programable and controllable. Thus RIS can be applied to various wireless communication systems to assist the performance improvement, and the study of RIS techniques has been attracting more and more attention from the industry and academia.

In \cite{huang2019reconfigurable}, RIS was used to achieve up to three times higher energy efficiency compared with conventional \emph{amplify-and-forward} (AF) relay. RIS was applied to wireless systems to achieve enhanced physical layer security in \cite{chen2019intelligent,guan2019intelligent}. The confidential data streams were transmitted to the legitimate receivers while keeping them secret from the eavesdroppers with the help of the RIS. RIS-aided multi-user downlink \emph{multiple-input single-output} (MISO) system was investigated in \cite{guo2020weighted}. The weighted sum rate was maximized by jointly designing the transmit beamforming and RIS phase shifts under the perfect and imperfect \emph{channel state information} (CSI) setup. The max-min fairness problem was considered in the RIS-aided multi-cell MISO systems \cite{xie2020max}. In \cite{xu2020resource}, the RIS-assisted multiuser full-duplex cognitive radio network was investigated. The secondary network employs a full-duplex BS to serve multiple half-duplex downlink and uplink secondary users simultaneously. Here, an RIS is deployed to improve the performance of the secondary network and mitigate the interference to the primary network. In \cite{han2019intelligent}, the power control problem was investigated for a physical-layer broadcasting scenario. The RIS-aided multi-group multicast MISO communication system was considered to maximize the sum rate of all the multicasting groups in \cite{zhou2019intelligent}. Also, the RIS-assisted symbiotic radio for an IoT communication system was proposed in \cite{zhang2020large}. RIS-assisted non-orthogonal multiple access system was studied in \cite{yang2019intelligent, ding2019simple}. In \cite{wang2019intelligent}, the joint active and passive precoding design for the RIS-assisted \emph{millimeter wave} (mmWave) communication was addressed by exploiting some important characteristics of mmWave channels for both single RIS and multi-RIS cases. The channel capacity optimization in indoor mmWave environments using RIS was studied in \cite{perovic2019channel}.

On the other hand, \emph{two-way relay network} (TWRN) is another promising strategy to improve the spectral efficiency in cooperative networks\cite{rankov2007spectral,liang2008optimal}. In TWRN, two phases are required to exchange information between two users through a \emph{two-way relay} (TWR). Here, the key challenge is how to design the beamforming matrix at the TWR. In \cite{zhang2009optimal} and its conference version \cite{liang2008optimal}, the authors derived the optimal structure of the beamforming matrix at the TWR and provided several sub-optimal schemes to maximize the sum-rate. The beamforming design for the \emph{multiple-input multiple-output} (MIMO) TWR communications to maximize the minimum end-to-end \emph{signal-to-noise ratio} (SNR) was investigated in \cite{wang2012maximin}. The problem was recast as a fractional programming problem and solved by using the Dinkelbach-type procedure combined with semi-definite programming. The beamforming design for multi-user TWR was studied in \cite{fang2013beamforming} through a max-min \emph{signal-to-interference-plus-noise ratio} (SINR) problem. The relay processing and power control problem for a multi-user two-way MIMO communication system was investigated in\cite{joung2009multiuser}.

The existing studies on RIS-assisted TWRN mainly exploited the RIS to replace the conventional relay to realize the information exchange between two users. Specifically, an RIS-assisted full-duplex MIMO TWRN system was investigated in \cite{zhang2020sum}, in which both users receive and transmit signals at the same time with the help of the RIS. In \cite{atapattu2020reconfigurable}, the authors analyzed the performance for reciprocal and non-reciprocal channels in the RIS-assisted TWRN system and derived the closed-form expressions for the outage probability and the spectral efficiency when the RIS is equipped with one reflective element. A more general case was studied in \cite{peng2020multiuser}, in which multiple full-duplex users exchange information with the full-duplex \emph{base station} (BS) with the assist of the RIS. As reported in reference \cite{pradhan2020reconfigurable}, however, the benefit of RIS is conditioned on the proper self-interference cancellation at the full-duplex TWR node, which brings high hardware complexity and low energy efficiency. In \cite{pradhan2020reconfigurable}, an RIS-enhanced two-way orthogonal-frequency-division multiplexing communication system with multiple pairs of users was investigated. By separating the available bandwidth into multiple orthogonal subbands and allocating them to the user pairs, the two-way device-to-device communication was accomplished.

In this paper, we investigate RIS-assisted TWRN, in which a multi-antenna BS serves two users to exchange information with the help of an RIS, where the BS and RIS can be considered as an active and passive TWRs, respectively. When any of the two links from the users to the BS is weak due to deep fading and shadowing, the RIS can enhance the weak link and provide fairness to both users. By intelligently reconfiguring the reflective elements on the RIS, the information exchange rate can be significantly improved. Here, our objective is to jointly design the beamforming and phase shift matrices at the BS and RIS, respectively, such that the minimum SNR of the two users is maximized under the transmit power constraint at the BS. When solving the optimization problem, the beamforming and phase shift matrices are coupled with each other, and the optimization problem is non-convex. To obtain a design insight, we first study the RIS phase shifts and power amplification parameter design for single-antenna BS case. We first propose the \emph{SNR-upper-bound-maximization} (SUM) algorithm which maximizes an upper bound of the original objective function, namely, the minimum of the combined channel gains seen by the two users. After that, an improved algorithm, called \emph{genetic-SNR-maximization} (GSM) algorithm, is proposed to solve the original problem approximately. When there are multiple antennas at the BS, the optimization problem can be approximately addressed by successively solving two decoupled subproblems, one to optimize the phase shift matrix at the RIS, one to optimize the beamforming matrix at the BS. The RIS phase shifts can be obtained by employing SUM or GSM method while the BS beamforming matrix can be obtained by \emph{optimized beamforming} (OB) or \emph{maximum-ratio-beamforming} (MRB) method. The main contributions of this study are summarized as follows.

\begin{itemize}
\item The RIS is applied as a passive TWR into TWRN to improve the information exchange rate.
\item The joint beamforming and RIS design problem for RIS-assisted TWRN is formulated to maximize the minimum SNR of the two users under the BS transmit power constraint.
\item To obtain the proper insight on the formulated optimization problem, the single-antenna BS case is first investigated. Since the problem is non-convex, we propose to maximize an upper bound of the original objective function, i.e., the minimum of the combined channel gains of the two users. The SUM and GSM algorithms are proposed to solve the design problem.
\item For the multiple-antenna BS case, the optimization problem is divided into two decoupled subproblems, one to optimize the RIS phase shifts, the other to optimize the BS beamforming matrix. The SUM and GSM methods are employed to obtain the RIS phase shifts while the OB and MRB methods are employed to obtain the BS beamforming matrix.

\item Numerical results verify that the proposed algorithms with RIS can improve the information exchange rate significantly in TWRN.
\end{itemize}

\begin{table}[htbp]\label{table:abbr}
\centering
\small{
\caption{List of abbreviations}

    \begin{tabular}{|l|l|}
    \cline{1-2}
    \textbf{Abbreviation} & \textbf{Description} \\
    \cline{1-2}
      AF &     Amplify-and-Forward \\
    \cline{1-2}
      AoA &  Angle of Arrival\\
    \cline{1-2}
      AoD & Angle of Departure\\
    \cline{1-2}
      BS &     Base Station \\
    \cline{1-2}
      CDF & Cumulative Distribution Function\\
    \cline{1-2}
      GSM & Genetic-SNR-Maximization\\
    \cline{1-2}
      GSM-MRB & Genetic-SNR-Maximization Maximum-Ratio-Beamforming\\
    \cline{1-2}
      GSM-OB & Genetic-SNR-Maximization Optimized-Beamforming\\
    \cline{1-2}
      MIMO  & Multiple-input Multiple-output   \\
    \cline{1-2}
      MISO &  Multiple-input Single-output \\
    \cline{1-2}
      mmWave & millimeter Wave \\
    \cline{1-2}
      MRR-MRT & Maximal-Ratio-Reception Maximal-Ratio-Transmission\\
    \cline{1-2}
      RIS & Reconfigurable Intelligent Surface\\
    \cline{1-2}
      SINR & Signal to Interference plus Noise Ratio\\
    \cline{1-2}
      SNR & Signal to Noise Ratio\\
    \cline{1-2}
      SUM & SNR-Upper-bound-Maximization\\
    \cline{1-2}
      SUM-MRB & SNR-Upper-bound-Maximization Maximum-Ratio-Beamforming\\
    \cline{1-2}
      SUM-OB & SNR-Upper-bound-Maximization Optimized-Beamforming\\
    \cline{1-2}
      TWR & Two-Way Relay\\
    \cline{1-2}
      TWRN & Two-Way Relay Network\\
    \cline{1-2}
      6G & Sixth Generation \\
    \cline{1-2}
    \end{tabular}
  
  }
\end{table}

The rest of this paper is organized as follows. Section~\ref{systemmodel} presents the system model for the RIS-assisted two-way relay network and provides a joint design optimization problem. In Section~\ref{secspecialcase}, the single-antenna BS case is presented to obtain insights into the original problem. In Section~\ref{secproposed1}, the algorithms to solve a problem for the multiple-antenna BS case are proposed. Section~\ref{secsimulation} provides simulation results to validate the effectiveness of the proposed algorithms. Section~\ref{secconclusion} concludes this study.

\emph{Notations}: For complex vector $\boldsymbol{v}$, $\boldsymbol{v}^*$, $\boldsymbol{v}^T$, $\boldsymbol{v}^H$, and diag($\boldsymbol{v}$) denote the conjugate, the transpose, the conjugate transpose, and the diagonal matrix with its diagonal elements given by $\boldsymbol{v}$. Scalar $\boldsymbol{v}_i$ denotes the \emph{i}th element of vector $\boldsymbol{v}$. $[\boldsymbol{v}]_{(1:N)}$ denotes the first $N$ elements of vector $\boldsymbol{v}$. $\boldsymbol{a}\otimes \boldsymbol{b}$ denotes the Kronecker product of vector $\boldsymbol{a}$ and $\boldsymbol{b}$. $\text{vec}(\boldsymbol{A})$ denotes the vectorization operation for matrix $\boldsymbol{A}$. $\boldsymbol{A}^{\star}$ denotes the optimal value of variable $\boldsymbol{A}$. $\text{tr}(\boldsymbol{A})$ and $\text{rank}(\boldsymbol{A})$ denote the trace and rank of the matrix $\boldsymbol{A}$, respectively. $\boldsymbol{A}[m,n]$ and $\boldsymbol{A}[m,:]$ denote the $(m,n)$th element and the $m$th row vector of matrix $\boldsymbol{A}$, respectively. $\boldsymbol{A}\succeq 0$ denotes that matrix $\boldsymbol{A}$ is a semi-definite matrix. The distribution of a \emph{circularly symmetric complex Gaussian} (CSCG) random variable with mean $\mu$ and variance $\sigma^2$ is denoted by ${\cal{CN}}(\mu,\sigma^2)$. Finally, the list of abbreviations appeared in this paper is given in Table I.

\section{System Model}\label{systemmodel}

\begin{figure}[!t]
\centering
\subfigure[] {
 \label{fig:a}
\includegraphics[width=0.4\textwidth]{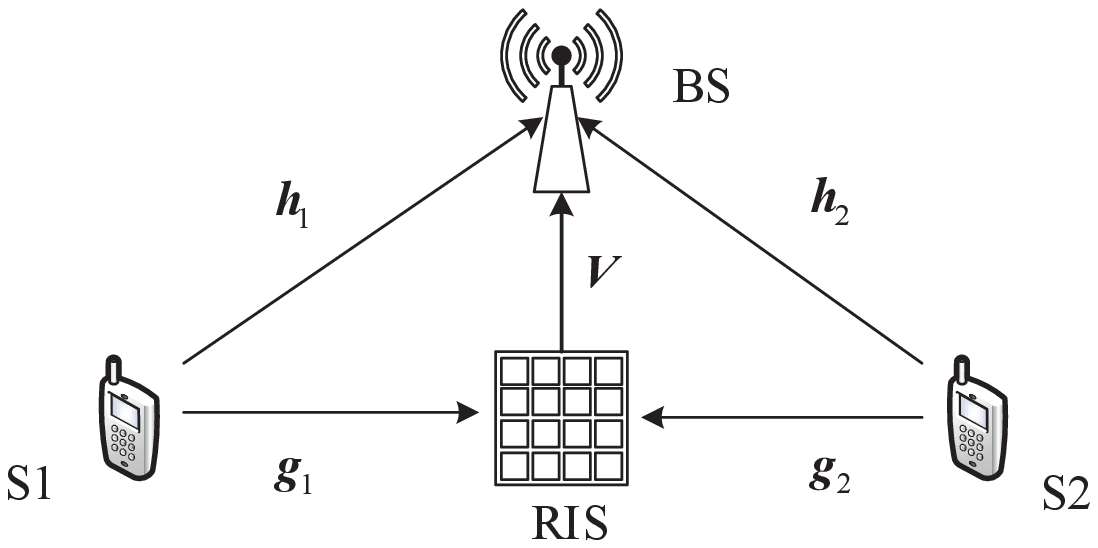}
}
\subfigure[]{
\label{fig:b}
\includegraphics[width=0.4\textwidth]{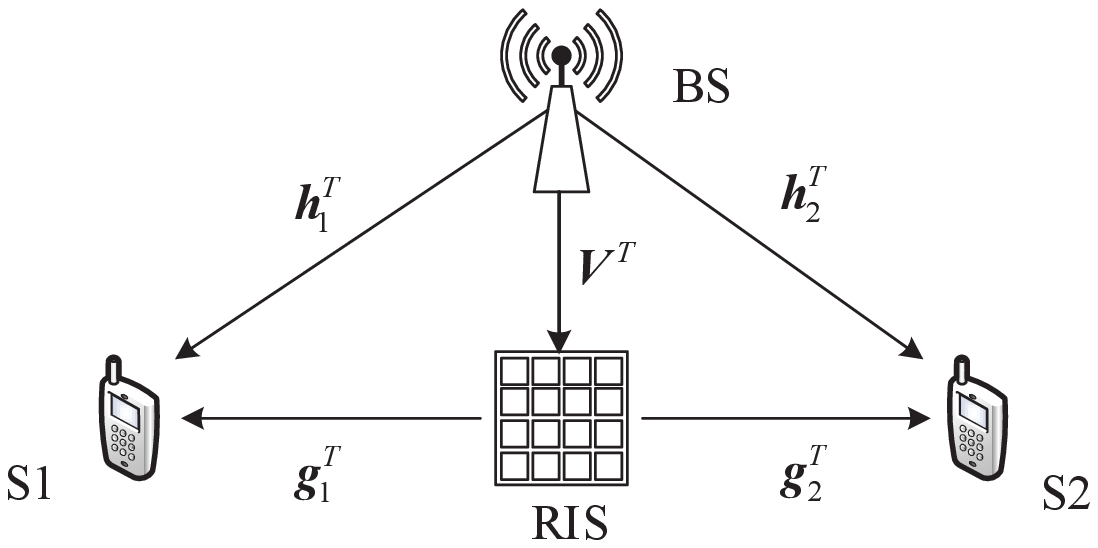}
}
\caption{ The system model for the RIS-assisted two-way relay network. (a) First phase: two users send information to the BS with the help of an RIS; (b) Second phase: The BS broadcasts the processed signal to the users with the help of an RIS.}
\label{fig:Fig0}
\end{figure}

In this paper, we consider an RIS-assisted TWRN. As illustrated in Fig. \ref{fig:Fig0}, the system consists of a BS with $M$ antennas, two single-antenna users, i.e., the source nodes denoted by S1 and S2, and an RIS with $N$ reflective elements, each of which can introduce a phase shift to the incident signal. Two users exchange information with the help of the BS and RIS, where the BS and RIS operate as active and passive TWRs, respectively. The RIS is deployed to enhance the TWR channels between the users and the BS so that the information exchange rate can be improved. In the following, we present the channel and the signal models of the RIS-assisted TWRN considered in this study.
\subsection{Channel Model}
We assume the flat fading channels, which means that the channels remain unchanged during one transmission block. The channels from S1 and S2 to the BS are respectively denoted by $\boldsymbol{h}_1\in \mathbb{C}^{M\times 1}$ and $\boldsymbol{h}_2\in \mathbb{C}^{M\times 1}$, which are modeled as the Rayleigh channels. Here, it is assumed that the \emph{line-of-sight} (LoS) path is blocked. The elements of the channels are independent and follow a distribution of $\sqrt{\eta(d_{\text{B},k})}{\cal{CN}}(0,1),k=1,2$, where $\eta(d_{\text{B},k})$ denotes the large-scale path loss component of the channels depending on the distance $d_{\text{B},k}$ between user $k$ and the BS. The channels from S1 to RIS, from S2 to RIS, and from RIS to BS are denoted by $\boldsymbol{g}_1 \in \mathbb{C}^{N\times 1}$, $\boldsymbol{g}_2 \in \mathbb{C}^{N\times 1}$, and $\boldsymbol{V} \in \mathbb{C}^{M\times N}$, respectively. Since the LoS path exists in these channels, the channels are modeled as the Rician channels, without loss of generality, i.e.,
\begin{align}
\boldsymbol{V}=\sqrt{\eta(d_{\text{B,R}})}\left(\sqrt{\frac{K_v}{1+K_v}}\boldsymbol{V}^{\text{LoS}}+\sqrt{\frac{1}{1+K_v}}\boldsymbol{V}^{\text{NLoS}}\right),
\end{align}
and
\begin{align}
\boldsymbol{g}_k=\sqrt{\eta(d_{k,\text{R}})}\left(\sqrt{\frac{K_k}{1+K_k}}\boldsymbol{g}_k^{\text{LoS}}+\sqrt{\frac{1}{1+K_k}}\boldsymbol{g}_k^{\text{NLoS}}\right),k=1,2,
\end{align}
 where $d_{\text{B,R}}$ is the distance between RIS and BS, and $d_{k,\text{R}}$ is the distance between user \emph{k} and RIS. The small-scale component consists of the LoS and non-LoS (NLoS) components. $K_v$ and $K_k$ are the Rician factors of $\boldsymbol{V}$ and $\boldsymbol{g}_k$, respectively. $\boldsymbol{V}^{\text{LoS(NLoS)}}$ and $\boldsymbol{g}_k^{\text{LoS(NLoS)}}$ denote the LoS (NLoS) components of $\boldsymbol{V}$ and $\boldsymbol{g}_k$, respectively. The NLoS components follow the standard complex Gaussian distribution with zero mean and unit variance. The LoS components can be expressed by the responses of the RIS.

 Since the RIS is a uniform rectangular array, the steering vector $\boldsymbol{a}_R(\theta,\psi)$ at the RIS is modeled as follows:
\begin{align}
\boldsymbol{a}_R\left(\theta,\psi\right)=\boldsymbol{a}_v\left(\theta,\psi\right)\otimes \boldsymbol{a}_h\left(\theta,\psi\right)\in\mathbb{C}^{1\times N},\label{steering_RIS}
\end{align}
where $\theta$ and $\psi$ denote the center azimuth and elevation angles, respectively, of the arriving or departing signals at the RIS; $\boldsymbol{a}_h\left(\theta,\psi\right)\in\mathbb{C}^{1 \times N_h}$ and $\boldsymbol{a}_v\left(\theta,\psi\right)\in\mathbb{C}^{1 \times N_v}$ are the steering vectors in the horizontal and vertical directions, respectively\cite{liu2019super}; $N_h$ and $N_v$ are the numbers of elements along the horizontal and vertical axes, respectively. Here, the elements of $\boldsymbol{a}_h$ and $\boldsymbol{a}_v$ can be modeled, respectively, as
\begin{align}
\left[\boldsymbol{a}_h\left(\theta,\psi\right)\right]_n=e^{-j\frac{2 \pi d}{\lambda}(n-1)\text{cos}\left(\psi\right)\text{sin}\left(\theta\right)},\forall{n} \in \left\{1,2,\ldots,N_h\right\},
\end{align}
\begin{align}
\left[\boldsymbol{a}_v\left(\theta,\psi\right)\right]_n=e^{j\frac{2 \pi d}{\lambda}(n-1)\text{cos}\left(\psi\right)\text{cos}\left(\theta\right)},\forall{n} \in \left\{1,2,\ldots,N_v\right\},
\end{align}
where $d$ and $\lambda$ are the antenna element separation and carrier wavelength, respectively. On the other hand, since BS employs a horizontal linear array, i.e., $\psi=0$, the steering vector at the BS is modeled as
\begin{align}
\boldsymbol{a}_B\left(\theta_{{\rm AoA},\text{B}}\right)=[1,e^{-j\frac{2\pi d}{\lambda}\text{sin}\left(\theta_{{\rm AoA},\text{B}}\right)},...,e^{-j\frac{2\pi d}{\lambda}(M-1)\text{sin}\left(\theta_{{\rm AoA},\text{B}}\right)}]\in\mathbb{C}^{1\times M},\label{steering_BS}
\end{align}
where $\theta_{{\rm AoA},\text{B}}$ denotes the {\it angle of arrival} (AoA) at the BS. $d$ is set as $d=\frac{1}{2}\lambda$ for simplicity.

Using the steering vectors in (\ref{steering_RIS}) and (\ref{steering_BS}), the LoS channels are modeled as follows:
\begin{align}
\boldsymbol{V}^{\text{LoS}}=\boldsymbol{a}_B^H\left(\theta_{{\rm AoA},\text{B}}\right)\boldsymbol{a}_R\left(\theta_{{\rm AoD},\text{R}},\psi_{{\rm AoD},\text{R}}\right),
\end{align}
where $\theta_{{\rm AoD},\text{R}}$ and $\psi_{{\rm AoD},\text{R}}$ denote the center azimuth \emph{angle of departure} (AoD) and elevation AoD, respectively, at the RIS and
\begin{align}
\boldsymbol{g}_k^{\text{LoS}}=\boldsymbol{a}_R^H(\theta_{{\rm AoA},k},\psi_{{\rm AoA},k}),k=1,2,
\end{align}
where $\theta_{{\rm AoA},k}$ and $\psi_{{\rm AoA},k}$ denote the center azimuth and elevation AoAs, respectively, at the RIS for user $k$.
\vspace{-0.6cm}
\subsection{Signal Model}
Under a time-division-duplex mode, two phases are needed for the two users to exchange information. In the first phase, two users send their information to the BS, simultaneously, with the help of an RIS as shown in Fig. 2(a). In the second phase, the BS performs beamforming for the signal received in the first phase and broadcasts the processed signal to the users, as shown in Fig. 2(b). Two users then decode the received signals, and the information exchange is completed.

In the first phase, the BS received signal is given by
\begin{align}
\boldsymbol{r}=\sqrt{P_S}\left(\boldsymbol{h}_1+\boldsymbol{V\Phi}_1\boldsymbol{g}_1\right)x_1+\sqrt{P_S}\left(\boldsymbol{h}_2+\boldsymbol{V\Phi}_1\boldsymbol{g}_2\right)x_2+\boldsymbol{u},
\end{align}
where $\boldsymbol{\Phi}_1$ denotes a diagonal phase shift matrix at RIS at the first phase; $P_S$ is the transmit power of users; and $\boldsymbol{u}\in\mathbb{C}^{M\times 1} \sim{\cal{CN}}(0,\sigma^2\boldsymbol{I})$ is the additive complex noise vector; and $x_1$ and $x_2$ denote the information signal of S1 and S2, respectively. Here, it is assumed that $x_1$ and $x_2$ conform to the same distribution ${\cal{CN}}(0,1)$. After receiving the signal $\boldsymbol{r}$, the BS performs beamforming to generate the retransmit signal as follows:
\begin{align}
\boldsymbol{s}=\boldsymbol{A}\boldsymbol{r},
\end{align}
where $\boldsymbol{A}\in \mathbb{C}^{M\times M}$ denotes the beamforming matrix$\footnote{The BS acts as an AF relay in the TWRN since the AF relay requires much less delay and computing power compared to a decode-and-forward relay\cite{levin2012amplify}.}$.

In the second phase, assuming that the channel reciprocity holds during the first and second phases, the second-phase channel can be modeled as the transpose of the first-phase channel. Denoting the phase shift matrix at the second phase by $\boldsymbol{\Phi}_2$, the received signal at S1 is then written as follows:
\begin{align}
y_1=&(\boldsymbol{h}_1+\boldsymbol{V\Phi}_2\boldsymbol{g}_1)^T\boldsymbol{s}+z_1\nonumber\\
=&\sqrt{P_S}(\boldsymbol{h}_1+\boldsymbol{V\Phi}_2\boldsymbol{g}_1)^T\boldsymbol{A}(\boldsymbol{h}_1+\boldsymbol{V\Phi}_1\boldsymbol{g}_1)x_1+\sqrt{P_S}(\boldsymbol{h}_1+\boldsymbol{V\Phi}_2\boldsymbol{g}_1)^T\boldsymbol{A}(\boldsymbol{h}_2+\boldsymbol{V\Phi}_1\boldsymbol{g}_2)x_2\nonumber\\
&+(\boldsymbol{h}_1+\boldsymbol{V\Phi}_2\boldsymbol{g}_1)^T\boldsymbol{A}\boldsymbol{u}+z_1.
\end{align}

Since S1 knows its own signal and CSI $(\boldsymbol{h}_1+\boldsymbol{V\Phi}_2\boldsymbol{g}_1)^T\boldsymbol{A}(\boldsymbol{h}_1+\boldsymbol{V\Phi}_1\boldsymbol{g}_1)$ following the signaling and channel estimation in \cite{gao2009optimal}, $x_1$ can be canceled from the received signal, i.e., self-interference cancellation, yielding
\begin{align}
\widetilde{y}_1=\sqrt{P_S}(\boldsymbol{h}_1+\boldsymbol{V\Phi}_2\boldsymbol{g}_1)^T\boldsymbol{A}(\boldsymbol{h}_2+\boldsymbol{V\Phi}_1\boldsymbol{g}_2)x_2+(\boldsymbol{h}_1+\boldsymbol{V\Phi}_2\boldsymbol{g}_1)^T\boldsymbol{A}\boldsymbol{u}+z_1.
\end{align}
Similarly, the signal received by S2 after self-interference cancellation is given by
\begin{align}
\widetilde{y}_2=\sqrt{P_S}(\boldsymbol{h}_2+\boldsymbol{V\Phi}_2\boldsymbol{g}_2)^T\boldsymbol{A}(\boldsymbol{h}_1+\boldsymbol{V\Phi}_1\boldsymbol{g}_1)x_1+(\boldsymbol{h}_2+\boldsymbol{V\Phi}_2\boldsymbol{g}_2)^T\boldsymbol{A}\boldsymbol{u}+z_2.
\end{align}
Here, $\boldsymbol{u}$, $z_1$, and $z_2$ are independent, where $z_1$ and $z_2$ conform to the distribution ${\cal{CN}}(0,\sigma^2)$.

The SNR at S1 can then be derived as follows:
\begin{align}
\gamma_1=\frac{P_S\left|(\boldsymbol{h}_1+\boldsymbol{V\Phi}_2\boldsymbol{g}_1)^T\boldsymbol{A}(\boldsymbol{h}_2+\boldsymbol{V\Phi}_1\boldsymbol{g}_2)\right|^2}{\left\|(\boldsymbol{h}_1+\boldsymbol{V\Phi}_2\boldsymbol{g}_1)^T\boldsymbol{A}\right\|^2\sigma^2+\sigma^2}=\frac{\beta\left|(\boldsymbol{h}_1+\boldsymbol{V\Phi}_2\boldsymbol{g}_1)^T\boldsymbol{A}(\boldsymbol{h}_2+\boldsymbol{V\Phi}_1\boldsymbol{g}_2)\right|^2}{\left\|(\boldsymbol{h}_1+\boldsymbol{V\Phi}_2\boldsymbol{g}_1)^T\boldsymbol{A}\right\|^2+1}\label{eqp4},
\end{align}
where $\beta=P_S/\sigma^2$. Similarly, the SNR at S2 is
\begin{align}
\gamma_2=\frac{\beta\left|(\boldsymbol{h}_2+\boldsymbol{V\Phi}_2\boldsymbol{g}_2)^T\boldsymbol{A}(\boldsymbol{h}_1+\boldsymbol{V\Phi}_1\boldsymbol{g}_1)\right|^2}{\left\|(\boldsymbol{h}_2+\boldsymbol{V\Phi}_2\boldsymbol{g}_2)^T\boldsymbol{A}\right\|^2+1}\label{eqp5}.
\end{align}
We have the following proposition.
\begin{mypro}\label{Proposition:Prop1}
During consecutive two-phase transmissions, we design the phase shifts, such that they do not vary over the two phases, i.e. $\boldsymbol{\Phi}_1=\boldsymbol{\Phi}_2=\boldsymbol{\Phi}=\text{diag}\left(e^{j\theta_1},e^{j\theta_2},...,e^{j\theta_N}\right)$ where $\theta_i$ denotes the phase shift introduced by the $i$th RIS element.
\end{mypro}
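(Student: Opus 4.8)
This proposition essentially records a deliberate design restriction rather than a structural property forced by the problem data, so the plan is to support it by: (i) pinpointing how $\boldsymbol{\Phi}_1$ and $\boldsymbol{\Phi}_2$ enter the SNRs and the power budget; (ii) arguing that collapsing them to a common $\boldsymbol{\Phi}$ buys substantial implementation and channel-estimation savings; and (iii) arguing it sacrifices little in the max-min SNR. For step (i), I would read off from (\ref{eqp4})--(\ref{eqp5}) that $\boldsymbol{\Phi}_1$ only shapes the first-phase composite channels $\boldsymbol{h}_k+\boldsymbol{V\Phi}_1\boldsymbol{g}_k$ seen at the BS input (and hence the BS transmit power), whereas $\boldsymbol{\Phi}_2$ only shapes the second-phase composite channels $\boldsymbol{h}_k+\boldsymbol{V\Phi}_2\boldsymbol{g}_k$ seen at the two users; since the physical channels are already assumed reciprocal over the two phases, the only reason the overall uplink and downlink effective channels would fail to be transposes of one another is a mismatch $\boldsymbol{\Phi}_1\neq\boldsymbol{\Phi}_2$.

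For step (ii) I would make two points. First, permitting $\boldsymbol{\Phi}_1\neq\boldsymbol{\Phi}_2$ forces the passive RIS to be reconfigured within a single coherence block and forces the BS to push two configuration words to the RIS controller per block, which doubles the control overhead and demands fast-switching hardware, while fixing $\boldsymbol{\Phi}_1=\boldsymbol{\Phi}_2=\boldsymbol{\Phi}$ removes both. Second, with a common $\boldsymbol{\Phi}$ the effective uplink channel $\boldsymbol{h}_k+\boldsymbol{V\Phi}\boldsymbol{g}_k$ and the effective downlink channel $(\boldsymbol{h}_k+\boldsymbol{V\Phi}\boldsymbol{g}_k)^T$ are reciprocal, so the composite CSI the BS needs to build $\boldsymbol{A}$ and the composite CSI the users need for the self-interference cancellation (following \cite{gao2009optimal}) are estimated once and reused, rather than estimated separately in each phase.

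For step (iii), the clean statement I would target is that the restriction is without loss of \emph{much} optimality: for either phase $i\in\{1,2\}$, both $\gamma_1$ and $\gamma_2$ in (\ref{eqp4})--(\ref{eqp5}) improve as the combined channel gains $\|\boldsymbol{h}_k+\boldsymbol{V\Phi}_i\boldsymbol{g}_k\|$ are enlarged and balanced, and a $\boldsymbol{\Phi}$ that does this is helpful in both phases at once, so the gap between the restricted and unrestricted max-min SNR should be small; the quantitative confirmation I would defer to the numerical study in Section~\ref{secsimulation}, since the algorithms of Section~\ref{secproposed1} all run under $\boldsymbol{\Phi}_1=\boldsymbol{\Phi}_2$. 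I expect step (iii) to be the main obstacle: a fully rigorous ``no loss of optimality'' claim does not follow by absorbing $\boldsymbol{\Phi}_2$ into $\boldsymbol{A}$, because $\boldsymbol{\Phi}_2$ enters the effective channel affinely rather than as a left factor of $\boldsymbol{A}$ and the BS power constraint couples the first-phase channels to the norm of $\boldsymbol{A}$, so the feasible $(\gamma_1,\gamma_2)$ region can genuinely shrink; accordingly I would present the statement as the adopted modeling simplification, justified as above, rather than as an exact equivalence.
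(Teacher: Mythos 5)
Your proposal is defensible as engineering, but it takes a genuinely different route from the paper and omits the one piece of mathematical content the paper's proof actually contains. The paper does not justify $\boldsymbol{\Phi}_1=\boldsymbol{\Phi}_2$ by control-overhead or CSI-reuse arguments; it argues (heuristically) that the two matrices have the \emph{same optimizer}. Concretely, Appendix A applies Hölder's inequality to (\ref{eqp4}) to get
$\gamma_1\leq \beta\|\boldsymbol{h}_2+\boldsymbol{V\Phi}_1\boldsymbol{g}_2\|^2\big/\bigl(1+\|(\boldsymbol{h}_1+\boldsymbol{V\Phi}_2\boldsymbol{g}_1)^T\boldsymbol{A}\|^{-2}\bigr)$,
in which $\boldsymbol{\Phi}_1$ appears only through the numerator (a combined channel gain) and $\boldsymbol{\Phi}_2$ only through the denominator; the analogous bound holds for $\gamma_2$. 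This decouples the two phase-shift optimizations. The $\boldsymbol{\Phi}_1$ subproblem is then directly a max-min of the two combined channel gains $\|\boldsymbol{h}_k+\boldsymbol{V\Phi}_1\boldsymbol{g}_k\|^2$; for $\boldsymbol{\Phi}_2$, the paper substitutes the MRR-MRT structure of $\boldsymbol{A}$ from \cite{liang2008optimal} and lower-bounds $\|(\boldsymbol{h}_1+\boldsymbol{V\Phi}_2\boldsymbol{g}_1)^T\boldsymbol{A}\|^2$ by a quantity proportional to $\|\boldsymbol{h}_2+\boldsymbol{V\Phi}_2\boldsymbol{g}_2\|^2$, concluding that the $\boldsymbol{\Phi}_2$ subproblem is the \emph{same} max-min problem, hence has the same solution. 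Your step (iii) gestures at exactly this intuition ("a $\boldsymbol{\Phi}$ that enlarges and balances the combined gains helps both phases at once") but then defers it to simulation rather than making it the argument. What your framing buys is honesty: you correctly observe that $\boldsymbol{\Phi}_2$ cannot be absorbed into $\boldsymbol{A}$ and that no exact equivalence is available --- and indeed the paper's own proof is only as strong as its approximations (tightness of the Hölder bound, the assumed beamforming structure, near-orthogonality of the combined channels for large $N$). What the paper's route buys is a concrete, checkable reduction of both phase designs to one common subproblem, which is also what motivates the SUM objective later. If you keep your version, you should at least add the decoupled-upper-bound argument, since without it the proposition reads as pure convention rather than a consequence of the problem structure.
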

\begin{proof}
  See proof in Appendix \ref{app:prop1}.
\end{proof}
The transmit power at BS is given by
\begin{align}
P(\boldsymbol{A},\boldsymbol{\Phi})=&\text{tr}\left[\boldsymbol{s}\boldsymbol{s}^H\right]=P_S\left\|\boldsymbol{A}(\boldsymbol{h}_1+\boldsymbol{V\Phi}\boldsymbol{g}_1)\right\|^2+P_S\left\|\boldsymbol{A}(\boldsymbol{h}_2+\boldsymbol{V\Phi}\boldsymbol{g}_2)\right\|^2+\sigma^2\text{tr}(\boldsymbol{A}\boldsymbol{A}^H).
\end{align}
\begin{remark}
For the more general case in which a large number of users must be served, two users are paired each time to realize an exchange of information. By allocating orthogonal resources, such as time and frequency, to different two-user pairs, i.e., TDMA and OFDMA, the proposed algorithms for a single pair of two users in the two-way communications can be directly extended to the multiple two-user pairs. Without the assistance of the RIS, the scenario is conventional multipair two-way relay network and there are already several researches working on this, such as \cite{wang2016multipair}.
\end{remark}
\subsection{Discussion on Channel State Information Acquisition}
To optimize the beamforming matrix $\boldsymbol{A}$ and the phase shift matrix $\boldsymbol{\Phi}$, the RIS-assisted TWRN system needs to acquire the CSI. In the literature, the CSI of the RIS channels can be obtained through, e.g., the Brute-Force method\cite{nadeem2019intelligent}, the compressive-sensing method \cite{he2019cascaded}, and the semi-passive RIS method\cite{taha2019enabling}. From the channel reciprocity during the uplink and downlink, the second-phase channel can be obtained by taking the transpose of the first-phase channel. Here, the channel estimation error can be tackled by using a robust design method, e.g., \cite{wang2020robust}.

\subsection{Problem Formulation}\label{secformulation}
To maximize the information exchange rate while considering fairness between two users, the minimum SNR of the two users is maximized by jointly optimizing beamforming matrix $\boldsymbol{A}$ and phase shift matrix $\boldsymbol{\Phi}$ at BS and RIS, respectively, subject to the transmit power constraint at the BS. Denoting the transmit power budget at the BS by $P_B$, the optimization problem is formulated as follows:
\begin{subequations}
\label{eq:Po}
\begin{align}
{\bf (Po) }: \underset{\boldsymbol{\Phi},\boldsymbol{A}}{\max}  \quad  &\min\left\{\gamma_1,\gamma_2\right\}\nonumber \\
\text{s.t.}\quad
&|\phi_n|=1,\forall n,\label{eqp2} \\
&\beta\left\|\boldsymbol{A}(\boldsymbol{h}_1+\boldsymbol{V\Phi}\boldsymbol{g}_1)\right\|^2+\beta\left\|\boldsymbol{A}(\boldsymbol{h}_2+\boldsymbol{V\Phi}\boldsymbol{g}_2)\right\|^2+\text{tr}(\boldsymbol{A}\boldsymbol{A}^H)\leq \frac{P_B}{\sigma^2}.\label{eqp3}
\end{align}
\end{subequations}
This problem is challenging to be directly solved because BS beamforming matrix $\boldsymbol{A}$ and phase shift matrix $\boldsymbol{\Phi}$ are coupled to each other. The quadratic term of $\boldsymbol{\Phi}$ in the numerator of the SNR in (\ref{eqp4}) and (\ref{eqp5}) makes the problem even more intractable. To solve the optimization problem ${\bf (Po)}$, henceforth, we propose several efficient algorithms, which are as shown in Table \ref{table_Proposed algorithms}.
\begin{table*}
  \centering
  \fontsize{9}{12}\selectfont
  \small{
  \caption{Proposed algorithms throughout the paper}
  \label{table_Proposed algorithms}
  {{
    \begin{tabular}{|c|p{3cm}<{\centering}|p{3cm}<{\centering}|p{5cm}<{\centering}|}
    \cline{1-4}
    {\diagbox{Scenarios}{Algorithms}} & {Optimizing $\boldsymbol{\Phi}$} &{Optimizing $\boldsymbol{A}$}&{Section} \\
    \cline{1-4}
    \multirow{2}{*}{Single-antenna BS case}
                       & SUM     &\multirow{2}{*}{Formula (21)} &Section III-A: SUM\cr\cline{2-2}\cline{4-4}
                       & GSM    & &Section III-B: GSM\cr
                       \cline{1-4}

    \multirow{4}{*}{Multiple-antenna BS case}
             &\multirow{2}{*}{SUM}         &OB  &Section IV-A: SUM-OB \cr\cline{3-4}
                                          & &MRB &Section IV-B: SUM-MRB\cr\cline{2-4}
    &\multirow{2}{*}{GSM}    &OB  &Section IV-C: GSM-OB \cr\cline{3-4}
                          & &MRB &Section IV-C: GSM-MRB\cr\cline{1-4}
    \end{tabular}
    }
    }
    }
\end{table*}
\section{RIS for Single-Antenna BS}\label{secspecialcase}
To obtain some insights on how to solve the problem ${\bf (Po)}$ in (\ref{eq:Po}) and design the system, we first study the single-antenna BS case. In this case, the channel matrix $\boldsymbol{V} \in \mathbb{C}^{M\times N}$ from RIS to BS degenerates to a vector, denoted by $\boldsymbol{v} \in \mathbb{C}^{1\times N}$. Moreover, beamforming matrix $\boldsymbol{A}$ reduces to a power amplification parameter $\tau$, which means that the BS amplifies the received signal without beamforming. The original problem is thus simplified to a problem of how to obtain the RIS phase shift matrix, $\boldsymbol{\Phi}$, and BS power amplification parameter $\tau$. For the single-antenna BS case, the BS power constraint in (\ref{eqp3}) is written as
\begin{align}
\tau(P_S|h_1+\boldsymbol{v\Phi}\boldsymbol{g}_1|^2+P_S|h_2+\boldsymbol{v\Phi}\boldsymbol{g}_2|^2+\sigma^2)\leq P_B.\label{single_power_constraint}
\end{align}
The SNRs at S1 and S2 in (\ref{eqp4}) and (\ref{eqp5}) are then derived respectively as follows:
\begin{align}
\gamma_{S1}=\frac{\beta \tau|(h_1+\boldsymbol{v\Phi}\boldsymbol{g}_1)^T(h_2+\boldsymbol{v\Phi}\boldsymbol{g}_2)|^2}{\tau|h_1+\boldsymbol{v\Phi}\boldsymbol{g}_1|^2+1},\label{gamma_1_single}\\
\gamma_{S2}=\frac{\beta \tau|(h_2+\boldsymbol{v\Phi}\boldsymbol{g}_2)^T(h_1+\boldsymbol{v\Phi}\boldsymbol{g}_1)|^2}{\tau|h_2+\boldsymbol{v\Phi}\boldsymbol{g}_2|^2+1}.\label{gamma_2_single}
\end{align}

Since the SNRs of both users increase as $\tau$ increases, for a given phase shift matrix $\boldsymbol{\Phi}$, the optimal power amplification parameter is obtained from the equality in (\ref{single_power_constraint}) as
\begin{align}
\tau=\frac{P_B}{P_S|h_1+\boldsymbol{v\Phi}\boldsymbol{g}_1|^2+P_S|h_2+\boldsymbol{v\Phi}\boldsymbol{g}_2|^2+\sigma^2}.\label{obtain_tau}
\end{align}

Using (\ref{single_power_constraint})--(\ref{gamma_2_single}), problem ${\bf (Po)}$ for the single-antenna BS case is then rewritten as follows:
\begin{subequations}
\label{eq:Ps-1}
\begin{align}
{\bf (Ps-1) }: \underset{\boldsymbol{\Phi}}{\max}  \quad  &\min\left\{\gamma_{S1},\gamma_{S2}\right\}\nonumber \\
\text{s.t.}\quad
&|\phi_n|=1,\forall n. \tag{22}
\end{align}
\end{subequations}

Problem ${\bf (Ps-1)}$ is still intractable because the quadratic term of the phase shift matrix still exists in the SNR. Thus, we consider the upper bounds of the SNRs as follows:
\begin{align}
\gamma_{S1}&=\frac{\beta \tau|h_1+\boldsymbol{v\Phi}\boldsymbol{g}_1|^2|h_2+\boldsymbol{v\Phi}\boldsymbol{g}_2|^2}{\tau|h_1+\boldsymbol{v\Phi}\boldsymbol{g}_1|^2+1}=\frac{\beta |h_1+\boldsymbol{v\Phi}\boldsymbol{g}_1|^2|h_2+\boldsymbol{v\Phi}\boldsymbol{g}_2|^2}{|h_1+\boldsymbol{v\Phi}\boldsymbol{g}_1|^2+\frac{1}{\tau}}\nonumber\\
&=\frac{\beta |h_1+\boldsymbol{v\Phi}\boldsymbol{g}_1|^2|h_2+\boldsymbol{v\Phi}\boldsymbol{g}_2|^2}{|h_1+\boldsymbol{v\Phi}\boldsymbol{g}_1|^2+ \frac{P_S|h_1+\boldsymbol{v\Phi}\boldsymbol{g}_1|^2+P_S|h_2+\boldsymbol{v\Phi}\boldsymbol{g}_2|^2+\sigma^2}{P_B}}=\frac{\beta|h_2+\boldsymbol{v\Phi}\boldsymbol{g}_2|^2}{1+\frac{P_S}{P_B}+\frac{P_S}{P_B}\frac{|h_2+\boldsymbol{v\Phi}\boldsymbol{g}_2|^2}{|h_1+\boldsymbol{v\Phi}\boldsymbol{g}_1|^2}+\frac{\sigma^2}{P_B|h_1+\boldsymbol{v\Phi}\boldsymbol{g}_1|^2}}\nonumber\\
&\leq \beta |  h_2 +{\boldsymbol v}{\boldsymbol \Phi}{\boldsymbol g}_2 |^2 \triangleq \bar{\gamma}_{S1}.\label{gamma_s1}
\end{align}
Similarly, the upper bound of $\gamma_{S2}$ is derived as $\bar{\gamma}_{S2} \triangleq \beta|h_1+\boldsymbol{v\Phi}\boldsymbol{g}_1|^2$. Because the upper bounds of SNRs are tight when the transmit power of the BS is much greater than that of users, i.e., $P_B\gg P_S$, and this is a typical case of TWRN, the upper bound of SNRs can be used to design the TWRN. Meanwhile, the optimality loss does not highly depend on the tightness of the upper bound. Thus, we devise two algorithms to solve ${\bf (Ps-1)}$ by maximizing the upper bound of SNRs.
\subsection{SNR-Upper-bound-Maximization (SUM) Algorithm}
The proposed SUM algorithm solves the following problem:
\begin{subequations}
\label{eq:Ps-2}
\begin{align}
{\bf (Ps-2) }: \underset{\boldsymbol{\Phi}}{\max}  \quad  &\min\left\{\bar{\gamma}_{S1}, \bar{\gamma}_{S2}\right\}\nonumber \\
\text{s.t.}\quad
&|\phi_n|=1,\forall n.\tag{24}
\end{align}
\end{subequations}
Denoting a phase shift vector as $\boldsymbol{\phi}=\left[e^{j\theta_1},e^{j\theta_2},...,e^{j\theta_N}\right]^T\in\mathbb{C}^{N\times 1}$, the combined channel $h_1+\boldsymbol{v\Phi}\boldsymbol{g}_1$ can be written as
\begin{align}
h_1+\boldsymbol{v\Phi}\boldsymbol{g}_1=h_1+\boldsymbol{v}\text{diag}(\boldsymbol{g}_1)\boldsymbol{\phi}=[\boldsymbol{v}\text{diag}(\boldsymbol{g}_1),h_1][\boldsymbol{\phi}^T,1]^T\triangleq \bar{\boldsymbol{g}}_1^H\bar{\boldsymbol{\phi}},
\end{align}
where $[\boldsymbol{v}\text{diag}(\boldsymbol{g}_1),h_1]\triangleq \bar{\boldsymbol{g}}_1^H \in\mathbb{C}^{ 1 \times (N+1)}$ and $[\boldsymbol{\phi}^T,1]^T\triangleq \bar{\boldsymbol{\phi}}\in\mathbb{C}^{ (N+1) \times 1}$. Similarly,
\begin{align}
h_2+\boldsymbol{v\Phi}\boldsymbol{g}_2=h_2+\boldsymbol{v}\text{diag}(\boldsymbol{g}_2)\boldsymbol{\phi}=[\boldsymbol{v}\text{diag}(\boldsymbol{g}_2),h_2][\boldsymbol{\phi}^T,1]^T \triangleq\bar{\boldsymbol{g}}_2^H\bar{\boldsymbol{\phi}}.
\end{align}
The problem ${\bf (Ps-2)}$ is then equivalently transformed to
\begin{subequations}
\label{eq:Ps-3}
\begin{align}
{\bf (Ps-3) }: \underset{\bar{\boldsymbol{\phi}}}{\max}  \quad  &\min\left\{|\bar{\boldsymbol{g}}_1^H\bar{\boldsymbol{\phi}}|^2,|\bar{\boldsymbol{g}}_2^H\bar{\boldsymbol{\phi}}|^2\right\}\nonumber \\
\text{s.t.}\quad
&|\phi_n|=1,\forall n.\tag{27}
\end{align}
\end{subequations}
By introducing an additional variable, denoted by $t$, the problem can be recast as follows:
\begin{subequations}
\label{eq:Ps-4}
\begin{align}
{\bf (Ps-4) }: \underset{\bar{\boldsymbol{\phi}},t}{\max}  \quad  &t\nonumber \\
\text{s.t.}\quad
&|\phi_n|=1,\forall n, \label{phi_constraint}\\
&\bar{\boldsymbol{g}}_1^H\bar{\boldsymbol{\phi}}\bar{\boldsymbol{\phi}}^H\bar{\boldsymbol{g}}_1 \geq t,\label{ps-4-second}\\
&\bar{\boldsymbol{g}}_2^H\bar{\boldsymbol{\phi}}\bar{\boldsymbol{\phi}}^H\bar{\boldsymbol{g}}_2 \geq t.\label{ps-4-third}
\end{align}
\end{subequations}
Defining $\boldsymbol{\Psi}\triangleq \bar{\boldsymbol{\phi}}\bar{\boldsymbol{\phi}}^H\in \mathbb{C}^{(N+1)\times (N+1)}$ where $\boldsymbol{\Psi} \succeq 0$ and $\text{rank}(\boldsymbol{\Psi})=1$, the semi-definite constraints in (\ref{ps-4-second}) and (\ref{ps-4-third}) become convex. Using $\boldsymbol{\Psi}$ and relaxing the rank-one constraint for $\boldsymbol{\Psi}$, we can solve the following SDP problem:
\begin{subequations}
\label{eq:Ps-5}
\begin{align}
{\bf (Ps-5) }: \underset{\boldsymbol{\Psi},t}{\max}  \quad  &t\nonumber \\
\text{s.t.}\quad
&\boldsymbol{\Psi}[n,n]=1,\forall n,\label{new_phi_constraint} \\
&\boldsymbol{\Psi} \succeq 0,\\
&\text{tr}[\boldsymbol{\Psi}\bar{\boldsymbol{g}}_1\bar{\boldsymbol{g}}_1^H] \geq t,\\
&\text{tr}[\boldsymbol{\Psi}\bar{\boldsymbol{g}}_2\bar{\boldsymbol{g}}_2^H] \geq t.
\end{align}
\end{subequations}
This SDP problem can be solved efficiently via CVX\cite{grant2014cvx}.
The optimal solution of $\boldsymbol{\Psi}$, however, is not generally a rank-one matrix. Therefore, after obtaining the optimal $\boldsymbol{\Psi}$ from ${\bf (Ps-5)}$, we need to find a rank-one solution by using the Gaussian randomization procedure as summarized in Algorithm 1. Once the optimal solution $\bar{\boldsymbol{\phi}}^{\star}$ is obtained, we obtain $\boldsymbol{\phi}^{\star}$ as
\begin{align}
\boldsymbol{\phi}^{\star}=e^{j \text{arg}\left(\left[\frac{\bar{\boldsymbol{\phi}}^{\star}}{\bar{\phi}_{N+1}}\right]_{\left(1:N\right)}\right)},\label{gainphi*}
\end{align}
where $\text{arg}(x)$ denotes the operation of taking the angle of the complex value $x$. The overall SUM algorithm is summarized in Algorithm 2.

\begin{algorithm}[t!]
\caption{ Gaussian randomization procedure for obtaining the rank-one solution}\label{AlgorithmSDR}
\begin{algorithmic}[1]
\STATE Input: {The solution of $\bf (Ps-5)$: $\boldsymbol{\Psi}^{\star}$}
\STATE Perform singular value decomposition for $\boldsymbol{\Psi}^{\star}$ as $\boldsymbol{\Psi}^{\star}=\boldsymbol{U}_1^H\boldsymbol{\Sigma}_1\boldsymbol{U}_1$.
\IF{$\boldsymbol{\Sigma}_1$ is a rank-one matrix}
\STATE $\bar{\boldsymbol{\phi}}^{\star}=\boldsymbol{U}_1[1,:]\sqrt{\boldsymbol{\Sigma}_1[1,1]}$;
\ELSE
\STATE Initialize $\cal{D}=\varnothing$.
\FOR{$d=1 \to D$}
\STATE Generate random vectors $\boldsymbol{\phi}_d=\boldsymbol{U}_1^H \boldsymbol{\Sigma}_1^{\frac12}\boldsymbol{e}_d$, where $\boldsymbol{e}_d \sim \mathcal{CN}(0,\boldsymbol{I}_N)$.
\IF{$\boldsymbol{\phi}_d$ satisfies the constraint of the Problem $\bf (Ps-5)$}
\STATE ${\cal{D}}={\cal{D}}\cup {\boldsymbol{\phi}_d}$.\\
\STATE Obtain the objective function value as $Q_d$.
\ENDIF
\ENDFOR
\STATE $\bar{\boldsymbol{\phi}}^{\star}=\arg \underset{ d \in \cal{D}} {\max} \  Q_d$.
\ENDIF
\end{algorithmic}
\end{algorithm}

\begin{algorithm}[t!]
\caption{ SUM algorithm to solve $\bf (Ps-1)$}\label{Algorithm1_single}
\begin{algorithmic}[1]
\STATE Solve $\bf (Ps-5)$ to obtain $\boldsymbol{\Psi}$.
\STATE Use Algorithm 1 to perform Gaussian randomization procedure for $\boldsymbol{\Psi}$ and obtain new $\bar{\boldsymbol{\phi}}$.
\STATE Obtain $\boldsymbol{\phi}^{\star}$ from (\ref{gainphi*}).
\STATE Obtain $\tau^{\star}$ from (\ref{obtain_tau}).
\STATE Return $\boldsymbol{\phi}^{\star}$ and $\tau^{\star}$.
\end{algorithmic}
\end{algorithm}
\subsection{Genetic-SNR-Maximization (GSM) Algorithm}
RIS phase shift matrix $\boldsymbol{\Phi}$ can be obtained from multiple candidates that maximize the minimum SNRs of the two users, which is a genetic algorithm. In the $i$th candidate generation, $\boldsymbol{\Phi}^{(i-1)}$ represents the RIS phase shifts obtained in the previous generation and the denominators of SNRs in (\ref{gamma_s1}) are approximated by using the previously generated phase shift matrix as follows:
\begin{align}
    \eta_1^{(i)}\approx1+\frac{P_S}{P_B}+\frac{P_S|h_2+\boldsymbol{v}\boldsymbol{\Phi}^{(i-1)}\boldsymbol{g}_2|^2+\sigma^2}{P_B|h_1+\boldsymbol{v}\boldsymbol{\Phi}^{(i-1)}\boldsymbol{g}_1|^2},\label{eta1}
\end{align}
\begin{align}
    \eta_2^{(i)}\approx1+\frac{P_S}{P_B}+\frac{P_S|h_1+\boldsymbol{v}\boldsymbol{\Phi}^{(i-1)}\boldsymbol{g}_1|^2+\sigma^2}{P_B|h_2+\boldsymbol{v}\boldsymbol{\Phi}^{(i-1)}\boldsymbol{g}_2|^2}\label{eta2}.
\end{align}
The $i$th genetic SNR maximization problem is then formulated as follows:
\begin{subequations}
\label{eq:Ps-6}
\begin{align}
{\bf (Ps-6) }: \underset{\boldsymbol{\Phi}^{(i)}}{\max}  \quad  &\min\left\{\frac{|h_1+\boldsymbol{v\Phi}^{(i)}\boldsymbol{g}_1|^2}{\eta_2^{(i)}},\frac{|h_2+\boldsymbol{v\Phi}^{(i)}\boldsymbol{g}_2|^2}{\eta_1^{(i)}}\right\}\nonumber \\
\text{s.t.}\quad
&|\phi_n|=1,\forall n.\tag{33}
\end{align}
\end{subequations}
The above problem can be recast as
\begin{subequations}
\label{eq:Ps-7}
\begin{align}
{\bf (Ps-7) }: \underset{\boldsymbol{\Psi}^{(i)},t}{\max}  \quad  &t\nonumber \\
\text{s.t.}\quad
&\boldsymbol{\Psi}^{(i)}[n,n]=1,\forall n, \\
&\boldsymbol{\Psi}^{(i)} \succeq 0,\\
&\text{tr}[\boldsymbol{\Psi}^{(i)}\bar{\boldsymbol{g}}_1\bar{\boldsymbol{g}}_1^H] \geq \eta_2^{(i)}t,\\
&\text{tr}[\boldsymbol{\Psi}^{(i)}\bar{\boldsymbol{g}}_2\bar{\boldsymbol{g}}_2^H] \geq \eta_1^{(i)}t.
\end{align}
\end{subequations}
This problem can be solved via CVX. In the GSM algorithm, we first initialize the RIS phase shifts $\boldsymbol{\phi}$ and $\tau$ with the solution obtained in Algorithm 2. We then calculate $\eta_1^{(i)}$ and $\eta_2^{(i)}$ with the solution obtained in the previous generation and solve a problem $\bf (P7-S)$ to obtain updated $\boldsymbol{\Psi}$. A rank-one solution is found by following Algorithm 1 and $\boldsymbol{\phi}^{\star}$ is given by (\ref{gainphi*}). We repeat the generation steps and record the minimum SNR until the fixed number of generations is reached. Finally, after the candidate generation, we choose the $\boldsymbol{\phi}^{\star}$ that maximizes the minimum SNR as the optimal solution. The overall GSM algorithm is summarized in Algorithm 3.
\begin{algorithm}[t!]
\caption{GSM algorithm to solve $\bf (Ps-1)$}\label{Algorithm2_single}
\begin{algorithmic}[1]
\STATE Initialize $\boldsymbol{\Phi}$ with the solution obtained in Algorithm 2. $I$ is the candidate generation number.
\FOR{$i=1 \to I$}
\STATE Calculate $\eta_1^{(i)}$ and $\eta_2^{(i)}$ by (\ref{eta1}) and (\ref{eta2}).
\STATE Solve $\bf (Ps-7)$ to update $\boldsymbol{\Psi}^{(i)}$.
\STATE Use Algorithm 1 to perform Gaussian randomization procedure for $\boldsymbol{\Psi}^{(i)}$ and obtain new $\bar{\boldsymbol{\phi}}^{(i)}$.
\STATE Obtain $\boldsymbol{\phi}^{(i)}$ from (\ref{gainphi*}).
\STATE Obtain $\tau^{(i)}$ from (\ref{obtain_tau}).
\STATE Record the minimum SNR and the corresponding $\boldsymbol{\phi}^{(i)}$ and $\tau^{(i)}$.
\ENDFOR
\STATE Choose $\boldsymbol{\phi}^{\star}$ and $\tau^{\star}$ that maximize the minimum SNR.
\STATE Return $\boldsymbol{\phi}^{\star}$ and $\tau^{\star}$.
\end{algorithmic}
\end{algorithm}
\subsection{Complexity Analysis}
In the SUM algorithm, the problem $\bf (Ps-5)$ is solved by using CVX and Gaussian randomization procedure. In the GSM algorithm, the problem $\bf (Ps-7)$ is continuously solved until the maximal generation number is reached. Therefore, following the complexity analysis of a typical interior-point method like a primal-dual path-following method\cite{helmberg1996interior}, the complexity of the SUM algorithm is  $\mathcal{O}((N+3)^4(N+1)^{\frac{1}{2}}\text{log}(\frac{1}{\epsilon_s}))$, whereas that of the GSM algorithm is $\mathcal{O}((1+I)(N+3)^4(N+1)^{\frac{1}{2}}\text{log}(\frac{1}{\epsilon_s}))$, where $\epsilon_s$ is the predefined solution accuracy and $I$ is the candidate generation number in the GSM algorithm.
\begin{remark}
The SUM algorithm is a one-step algorithm, whereas the GSM algorithm is a genetic algorithm that repeatedly generates the candidates of the solution and initialized with the solution obtained in the SUM algorithm. Genetic algorithms are commonly used to generate high-quality solutions to optimization and search problems by relying on biologically inspired operations, such as mutation, crossover, and selection. In practical scenarios, the candidate generation number is not large. By setting different candidate generation numbers, we are able to balance the complexity and the performance. Specifically, to obtain better performance, we can set the candidate generation number large to allow further exploration in the genetic algorithms. When the phase shift and beamforming matrices have to be updated frequently, we may set the candidate generation number small to reduce the complexity. The SUM algorithm is much simpler, whereas the GSM algorithm can achieve better performance at the cost of computational complexity. By proposing the SUM and GSM algorithms, we aim to achieve a favorable tradeoff between performance and complexity.
\end{remark}
\begin{remark}
From the optimization for the single-antenna BS case case, we can obtain insight into how to maximize the upper bound of the SNR to obtain the optimal phase shift matrix. Furthermore, we can obtain the optimal power amplification parameter in a closed-form under the single-antenna BS case. On the other hand, for the multiple-antenna BS case, we need to optimize both beamforming and phase shift matrices.
\end{remark}

\section{RIS for Multiple-Antenna BS}\label{secproposed1}
In this section, we propose algorithms to design a phase shift matrix of RIS and a beamforming matrix of multiple-antenna BS. To solve the original problem ${\bf (Po)}$ in (\ref{eq:Po}), it is divided into two subproblems, namely one for the RIS phase shift matrix and the other one for the BS beamforming matrix. Following the similar optimization procedure in Section III, an SUM-OB algorithm is devised. We then provide a low complexity method to obtain the beamforming matrix by utilizing a maximal-ratio-reception maximal-ratio-transmission (MRR-MRT) relaying scheme in \cite{liang2008optimal} and propose the SUM-MRB algorithm. The GSM-OB and GSM-MRB algorithms are also developed to obtain the phase shifts $\boldsymbol{\Phi}$ and the beamforming matrix $\boldsymbol{A}$ by using a genetic algorithm for the case of multiple-antenna BS in TWRN.
\subsection{SUM-OB Algorithm}
\subsubsection{Phase Shift Optimization}
From (A1), the SNR upper bounds of S1 and S2 are $\beta\left\|\boldsymbol{h}_2+\boldsymbol{V\Phi}\boldsymbol{g}_2\right\|^2$ and $\beta\left\|\boldsymbol{h}_1+\boldsymbol{V\Phi}\boldsymbol{g}_1\right\|^2$, respectively. Similarly to the single-antenna BS case, an optimization problem to find the optimal phase shift matrix that can maximize the upper bound of minimum SNR of users for the multiple-antenna BS case can be formulated as follows:
\begin{subequations}
\label{eq:Pm-a1}
\begin{align}
{\bf (Pm-a1) }: \underset{\boldsymbol{\Phi}}{\max}  \quad  &\min\left\{\left\|\boldsymbol{h}_1+\boldsymbol{V\Phi}\boldsymbol{g}_1\right\|^2,\left\|\boldsymbol{h}_2+\boldsymbol{V\Phi}\boldsymbol{g}_2\right\|^2\right\} \nonumber \\
\text{s.t.}\quad
&|\phi_n|=1,\forall n. \tag{35}
\end{align}
\end{subequations}
We note that the maximization of the upper bound of SNR conveys physical meaning. The more the combined channel gain is, the larger the SNR becomes.
By introducing an additional variable $Q$ and replacing the variable $\boldsymbol{\Phi}$ with $\bar{\boldsymbol{\phi}}$, the problem is further transformed to the following rank-relaxed SDP problem:
\begin{subequations}
\label{eq:Pm-a2}
\begin{align}
{\bf (Pm-a2) }: \underset{\boldsymbol{\Psi},Q}{\max}  \quad  &Q\nonumber \\
\text{s.t.}\quad
&\boldsymbol{\Psi}[n,n]=1,\forall n, \\
&\boldsymbol{\Psi} \succeq 0,\\
&\text{tr}[\boldsymbol{\Psi}\bar{\boldsymbol{G}}_1^H\bar{\boldsymbol{G}}_1] \geq Q,\\
&\text{tr}[\boldsymbol{\Psi}\bar{\boldsymbol{G}}_2^H\bar{\boldsymbol{G}}_2] \geq Q.
\end{align}
\end{subequations}
The SDP problem ${\bf (Pm-a2)}$ can be efficiently solved via CVX. From the optimal $\boldsymbol{\Psi}$ of (\ref{eq:Pm-a2}), optimal rank-one solution $\boldsymbol{\phi}^{\star}$ is obtained from Algorithm 1 and (\ref{gainphi*}).

\subsubsection{Beamforming Matrix Optimization}
After obtaining phase shift matrix $\boldsymbol{\Phi}$ for RIS, the original problem $\bf (Po)$ is transformed to
\begin{align}
{\bf (Pm-b1) }: \underset{\boldsymbol{A}}{\max}  \quad  &\min\left\{\gamma_1,\gamma_2\right\} \notag\\
\text{s.t.}\quad
&(\ref{eqp4}),(\ref{eqp5}), \text{and} \ (\text{\ref{eqp3}})\notag.
\end{align}
Defining $\widetilde{\boldsymbol{h}}_1\triangleq \boldsymbol{h}_1+\boldsymbol{V\Phi}\boldsymbol{g}_1\in \mathbb{C}^{M\times 1}$ and $\widetilde{\boldsymbol{h}}_2\triangleq \boldsymbol{h}_2+\boldsymbol{V\Phi}\boldsymbol{g}_2\in \mathbb{C}^{M\times 1}$, problem ${\bf (Pm-b1)}$ is further rewritten as follows:
\begin{subequations}
\label{eq:Pm-b2}
\begin{align}
{\bf (Pm-b2) }: \underset{\boldsymbol{A}}{\max}  \quad  &\min\left\{\frac{\left|\widetilde{\boldsymbol{h}}_1^T\boldsymbol{A}\widetilde{\boldsymbol{h}}_2\right|^2}{\left\|\widetilde{\boldsymbol{h}}_1^T\boldsymbol{A}\right\|^2+1},\frac{\left|\widetilde{\boldsymbol{h}}_2^T\boldsymbol{A}\widetilde{\boldsymbol{h}}_1\right|^2}{\left\|\widetilde{\boldsymbol{h}}_2^T\boldsymbol{A}\right\|^2+1}\right\}\nonumber \\
\text{s.t.}\quad
&   P_S\left\|\boldsymbol{A}\widetilde{\boldsymbol{h}}_1\right\|^2+P_S\left\|\boldsymbol{A}\widetilde{\boldsymbol{h}}_2\right\|^2+\sigma^2 \text{tr}[\boldsymbol{A}\boldsymbol{A}^H]\leq P_B.\label{eqp10}\tag{37}
\end{align}
\end{subequations}
By introducing an additional variable $Q'$, problem ${\bf (Pm-b2)}$ is recast as
\begin{subequations}
\label{eq:Pm-b3}
\begin{align}
{\bf (Pm-b3) }: \underset{\boldsymbol{A},Q'}{\max}  \quad  &Q'\nonumber \\
\text{s.t.}\quad
&\frac{\left|\widetilde{\boldsymbol{h}}_1^T\boldsymbol{A}\widetilde{\boldsymbol{h}}_2\right|^2}{\left\|\widetilde{\boldsymbol{h}}_1^T\boldsymbol{A}\right\|^2+1} \geq Q',\label{m-b3-first}\\
&\frac{\left|\widetilde{\boldsymbol{h}}_2^T\boldsymbol{A}\widetilde{\boldsymbol{h}}_1\right|^2}{\left\|\widetilde{\boldsymbol{h}}_2^T\boldsymbol{A}\right\|^2+1} \geq Q',\label{m-b3-second}\\
&(\ref{eqp10}).\nonumber
\end{align}
\end{subequations}
By denoting $\text{vec}\left(\boldsymbol{A}\right)=\boldsymbol{a}\in \mathbb{C}^{M^2\times 1}$, we have
\begin{align}
\left|\widetilde{\boldsymbol{h}}_1^T\boldsymbol{A}\widetilde{\boldsymbol{h}}_2\right|^2=\ &\text{tr}\left[(\widetilde{\boldsymbol{h}}_1^T\boldsymbol{A}\widetilde{\boldsymbol{h}}_2)^H \widetilde{\boldsymbol{h}}_1^T\boldsymbol{A}\widetilde{\boldsymbol{h}}_2\right]=\text{vec}\left(\widetilde{\boldsymbol{h}}_1^T\boldsymbol{A}\widetilde{\boldsymbol{h}}_2\right)^H \text{vec}\left(\widetilde{\boldsymbol{h}}_1^T\boldsymbol{A}\widetilde{\boldsymbol{h}}_2\right) \nonumber\\
=\ &\text{vec}\left(\boldsymbol{A}\right)^H\left(\widetilde{\boldsymbol{h}}_2^T \otimes \widetilde{\boldsymbol{h}}_1^T\right)^H\left(\widetilde{\boldsymbol{h}}_2^T \otimes \widetilde{\boldsymbol{h}}_1^T\right)\text{vec}\left(\boldsymbol{A}\right)=\boldsymbol{a}^H\boldsymbol{C}_1\boldsymbol{a},\\
\left\|\widetilde{\boldsymbol{h}}_1^T\boldsymbol{A}\right\|^2=\ &\text{vec}(\widetilde{\boldsymbol{h}}_1^T\boldsymbol{A})^H \text{vec}(\widetilde{\boldsymbol{h}}_1^T\boldsymbol{A})=\text{vec}\left(\boldsymbol{A}\right)^H\left(\boldsymbol{I} \otimes \widetilde{\boldsymbol{h}}_1^T\right)^H\left(\boldsymbol{I} \otimes \widetilde{\boldsymbol{h}}_1^T\right)\text{vec}\left(\boldsymbol{A}\right)\nonumber\\
=\ &\boldsymbol{a}^H\boldsymbol{D}_1\boldsymbol{a}.
\end{align}
Therefore, the first constraint (\ref{m-b3-first}) can be rewritten as
\begin{align}
\boldsymbol{a}^H\left(\boldsymbol{C}_1- Q'\boldsymbol{D}_1\right)\boldsymbol{a}\geq Q'.
\end{align}
After applying the similar techniques to the other two constraints, (\ref{eqp10}) and (\ref{m-b3-second}), the problem ${\bf (Pm-b3)}$ in (\ref{eq:Pm-b3}) is transformed as follows:
\begin{subequations}
\label{eq:Pm-b4}
\begin{align}
{\bf (Pm-b4) }: \underset{\boldsymbol{a},Q'}{\max}  \quad  &Q'\nonumber \\
\text{s.t.}\quad
&\boldsymbol{a}^H\left(\boldsymbol{C}_1-Q'\boldsymbol{D}_1\right)\boldsymbol{a}\geq Q',\\
&\boldsymbol{a}^H\left(\boldsymbol{C}_2-Q'\boldsymbol{D}_2\right)\boldsymbol{a}\geq Q',\\
&\boldsymbol{a}^H\boldsymbol{F}\boldsymbol{a}\leq P_B,
\end{align}
\end{subequations}
where
\begin{align}
\boldsymbol{C}_1=&\left(\widetilde{\boldsymbol{h}}_2^T \otimes \widetilde{\boldsymbol{h}}_1^T\right)^H\left(\widetilde{\boldsymbol{h}}_2^T \otimes \widetilde{\boldsymbol{h}}_1^T\right),
\boldsymbol{D}_1=\left(\boldsymbol{I} \otimes \widetilde{\boldsymbol{h}}_1^T\right)^H\left(\boldsymbol{I} \otimes \widetilde{\boldsymbol{h}}_1^T\right),\\
\boldsymbol{C}_2=&\left(\widetilde{\boldsymbol{h}}_1^T \otimes \widetilde{\boldsymbol{h}}_2^T\right)^H\left(\widetilde{\boldsymbol{h}}_1^T \otimes \widetilde{\boldsymbol{h}}_2^T\right),
\boldsymbol{D}_2=\left(\boldsymbol{I} \otimes \widetilde{\boldsymbol{h}}_2^T\right)^H\left(\boldsymbol{I} \otimes \widetilde{\boldsymbol{h}}_2^T\right),\\
\boldsymbol{F}=&P_S\left(\widetilde{\boldsymbol{h}}_1^T\otimes \boldsymbol{I}\right)^H\left(\widetilde{\boldsymbol{h}}_1^T\otimes \boldsymbol{I}\right)+P_S\left(\widetilde{\boldsymbol{h}}_2^T\otimes \boldsymbol{I}\right)^H\left(\widetilde{\boldsymbol{h}}_2^T\otimes \boldsymbol{I}\right)+\sigma^2 \boldsymbol{I}\label{eqp-12-5}.
\end{align}
Defining $\boldsymbol{\Xi}\triangleq\boldsymbol{a}\boldsymbol{a}^H\in \mathbb{C}^{M^2\times M^2}$ and relaxing the rank-one constraint for $\boldsymbol{\Xi}$, optimization problem ${\bf (Pm-b4)}$ can be recast as follows:
\begin{subequations}
\label{eq:Pm-b5}
\begin{align}
{\bf (Pm-b5) }: \underset{\boldsymbol{\Xi},Q'}{\max}  \quad  &Q'\nonumber \\
\text{s.t.}\quad
&\boldsymbol{\Xi} \succeq 0,\\
&\text{tr}\left[\boldsymbol{\Xi}\left(\boldsymbol{C}_1-Q'\boldsymbol{D}_1\right)\right]\geq Q',\\
&\text{tr}\left[\boldsymbol{\Xi}\left(\boldsymbol{C}_2-Q'\boldsymbol{D}_2\right)\right]\geq Q',\\
&\text{tr}\left[\boldsymbol{\Xi}\boldsymbol{F}\right]\leq P_B.
\end{align}
\end{subequations}

This problem is still non-convex because $Q'$ and $\boldsymbol{\Xi}$ are coupled, yet it is convex with respect to each of $Q'$ and $\boldsymbol{\Xi}$. Thus, we find the solution by using a convex feasibility problem test\cite{joung2014energy}. First, we obtain $Q_{low}$ and $Q_{up}$ of $Q'$ where $Q_{low}$ makes the problem $\bf (Pm-b5)$ feasible and $Q_{up}$ makes the problem $\bf (Pm-b5)$ infeasible. We then calculate $Q_{new}=\frac{Q_{up}+Q_{low}}{2}$ and test the feasibility of the problem $\bf (Pm-b5)$ by replacing $Q'$ with $Q_{new}$. If $Q_{new}$ makes the problem feasible, we update $Q_{low}=Q_{new}$; otherwise, we update $Q_{up}=Q_{new}$. These steps are repeated until a stopping criterion is met. For a certain $Q'$, the problem is an SDP problem. $\boldsymbol{\Xi}$ can be obtained by solving the feasibility problem via CVX. With $\boldsymbol{\Xi}$, Algorithm 1 performs the Gaussian randomization procedure to obtain $\boldsymbol{a}$, and the beamforming matrix $\boldsymbol{A}$ can be recovered from $\boldsymbol{a}$ by reshaping $\boldsymbol{a}$ as an $M \times M$ matrix. The detailed steps of the SUM-OB algorithm are summarized in Algorithm 4.

\begin{algorithm}[t!]
\caption{ SUM-OB algorithm to solve $\bf (Po)$}\label{Algorithm1_multi}
\begin{algorithmic}[1]
\STATE Solve $\bf (Pm-a2)$ to obtain $\boldsymbol{\Psi}^{\star}$.
\STATE Use Algorithm 1 to perform Gaussian randomization procedure for $\boldsymbol{\Psi}^{\star}$ and obtain  $\bar{\boldsymbol{\phi}}^{\star}$.
\STATE Compute $\boldsymbol{\phi}^{\star}$ by (\ref{gainphi*}).
\STATE For given $\boldsymbol{\phi}^{\star}$ and $Q_{low}, Q_{up}$, \WHILE{$Q_{up}-Q_{low}\geq \epsilon_1$}
\STATE Calculate $Q_{new}=\frac{Q_{up}+Q_{low}}{2}$.
\STATE{Solve the feasibility problem $\bf (Pm-b5)$ with given $Q'=Q_{new}$.}
\IF {the problem is feasible}
\STATE $Q_{low}=Q_{new}$, update $\boldsymbol{\Xi}$.
\ELSE
\STATE $Q_{up}=Q_{new}$.
\ENDIF
\ENDWHILE
\STATE Obtain $\boldsymbol{\Xi}^{\star}$.
\STATE Use Algorithm 1 to perform Gaussian randomization procedure for $\boldsymbol{\Xi}^{\star}$ and obtain $\boldsymbol{a}$.
\STATE Recover $\boldsymbol{A}^{\star}$ from $\boldsymbol{a}$.
\STATE  Return $\boldsymbol{\phi}^{\star}$ and $\boldsymbol{A}^{\star}$.
\end{algorithmic}
\end{algorithm}

\subsection{SUM-MRB Algorithm}
Since the computational complexity to obtain the beamforming matrix $\boldsymbol{A}$ is high, a low-complexity SUM-MRB algorithm is devised. In the SUM-MRB algorithm, RIS phase shift matrix $\boldsymbol{\Phi}$ is obtained through the same method as the SUM-OB algorithm, whereas beamforming matrix $\boldsymbol{A}$ is obtained through an MRR-MRT relaying scheme in \cite{liang2008optimal}.

Defining $\boldsymbol{H}_1\triangleq\left[\boldsymbol{h}_1+\boldsymbol{V\Phi}\boldsymbol{g}_1,\boldsymbol{h}_2+\boldsymbol{V\Phi}\boldsymbol{g}_2\right]\in \mathbb{C}^{M\times 2}$ and $\boldsymbol{H}_2\triangleq\left[\boldsymbol{h}_2+\boldsymbol{V\Phi}\boldsymbol{g}_2,\boldsymbol{h}_1+\boldsymbol{V\Phi}\boldsymbol{g}_1\right]^T\in \mathbb{C}^{2\times M}$, beamforming matrix $\boldsymbol{A}$ can be designed as follows \cite{liang2008optimal}:
\begin{align}
\boldsymbol{A}=&\alpha \boldsymbol{H}_2^H \boldsymbol{H}_1^H=\alpha \left(\bar{\boldsymbol{G}}_2\bar{\boldsymbol{\phi}}\bar{\boldsymbol{\phi}}^T\bar{\boldsymbol{G}}_1^T+\bar{\boldsymbol{G}}_1\bar{\boldsymbol{\phi}}\bar{\boldsymbol{\phi}}^T\bar{\boldsymbol{G}}_2^T\right)^*=\alpha \boldsymbol{A}_1,\label{A_structure}
\end{align}
where $\alpha$ is used to satisfy the power constraint with the equality in (\ref{eqp3}). Specifically, $\alpha$ is derived as
\begin{align}
\alpha=\sqrt{\frac{P_B}{P_S\text{tr}\left[\boldsymbol{A}_1\left(\bar{\boldsymbol{G}}_1\boldsymbol{\Psi}\bar{\boldsymbol{G}}_1^H+\bar{\boldsymbol{G}}_2\boldsymbol{\Psi}\bar{\boldsymbol{G}}_2^H\right)\boldsymbol{A}_1^H\right]+\sigma^2\text{tr}\left[\boldsymbol{A}_1\boldsymbol{A}_1^H\right]}}.\label{A_ratio}
\end{align}
In (\ref{A_structure}), $\boldsymbol{H}_1^H$ corresponds to
receive beamforming at BS, which maximizes the received energy along the direction of $\boldsymbol{h}_1+\boldsymbol{V\Phi}\boldsymbol{g}_1$ and $\boldsymbol{h}_2+\boldsymbol{V\Phi}\boldsymbol{g}_2$, and $\boldsymbol{H}_2^H$ corresponds to transmit beamforming, which maximizes the transmission energy along the direction of $(\boldsymbol{h}_2+\boldsymbol{V\Phi}\boldsymbol{g}_2)^T$ from BS to S2 and $(\boldsymbol{h}_1+\boldsymbol{V\Phi}\boldsymbol{g}_1)^T$ from BS to S1.
This scheme is called an MRR-MRT relaying scheme.

 The SUM-MRB algorithm that employs the MRR-MRT to obtain the beamforming matrix is summarized in Algorithm 5. We note that both SUM-OB and the SUM-MRB algorithms are a one-step algorithm. In the next subsection, we will present the genetic algorithms, namely, GSM-OB and GSM-MRB algorithms, for the multiple-antenna BS case.

\begin{algorithm}[t!]
\caption{SUM-MRB algorithm to solve $\bf (Po)$}\label{Algorithm1_multi}
\begin{algorithmic}[1]
\STATE Solve $\bf (Pm-a2)$ to obtain $\boldsymbol{\Psi}^{\star}$.
\STATE Use Algorithm 1 to perform Gaussian randomization procedure for $\boldsymbol{\Psi}^{\star}$ and obtain  $\bar{\boldsymbol{\phi}}^{\star}$.
\STATE Compute $\boldsymbol{\phi}^{\star}$ by (\ref{gainphi*}).
\STATE Compute $\boldsymbol{A}^{\star}$ by (\ref{A_structure}) and (\ref{A_ratio}).
\STATE Return $\boldsymbol{\phi}^{\star}$ and $\boldsymbol{A}^{\star}$.
\end{algorithmic}
\end{algorithm}

\subsection{GSM-OB and GSM-MRB Algorithms}
In the GSM-OB algorithm, we obtain RIS phase shift matrix $\boldsymbol{\Phi}$ and beamforming matrix $\boldsymbol{A}$ by consequently maximizing the SNRs of the two users. In the $i$th generation, denoting the RIS phase shift and beamforming matrices by $\boldsymbol{\Phi}^{(i-1)}$ and $\boldsymbol{A}^{(i-1)}$, respectively, in the previous generation, we obtain $\boldsymbol{\nu}_1^{(i)}\in \mathbb{C}^{1\times M}$, $\boldsymbol{\nu}_2^{(i)}\in \mathbb{C}^{1\times M}$, $\zeta_1^{(i)}$, and $\zeta_2^{(i)}$ as follows:
\begin{align}
\boldsymbol{\nu}_1^{(i)}=&\left(\boldsymbol{h}_1+\boldsymbol{V}\boldsymbol{\Phi}^{(i-1)}\boldsymbol{g}_1\right)^T\boldsymbol{A}^{(i-1)},\
\boldsymbol{\nu}_2^{(i)}=\left(\boldsymbol{h}_2+\boldsymbol{V}\boldsymbol{\Phi}^{(i-1)}\boldsymbol{g}_2\right)^T\boldsymbol{A}^{(i-1)},\label{nu_12}\\
\zeta_1^{(i)}=&\left\|\boldsymbol{\nu}_1^{(i)}\right\|^2+1,\  \zeta_2^{(i)}=\left\|\boldsymbol{\nu}_2^{(i)}\right\|^2+1\label{zeta_12}.
\end{align}
The SNRs for S1 and S2 are written as
\begin{align}
    \gamma_1^{(i)}=\frac{\beta\left|\boldsymbol{\nu}_1^{(i)}\left(\boldsymbol{h}_2+\boldsymbol{V\Phi}^{(i)}\boldsymbol{g}_2\right)\right|^2}{\zeta_1^{(i)}}=\frac{\beta\left|\boldsymbol{\nu}_1^{(i)}\bar{\boldsymbol{G}}_2\bar{\boldsymbol{\phi}}^{(i)}\right|^2}{\zeta_1^{(i)}},\\
    \gamma_2^{(i)}=\frac{\beta\left|\boldsymbol{\nu}_2^{(i)}\left(\boldsymbol{h}_1+\boldsymbol{V\Phi}^{(i)}\boldsymbol{g}_1\right)\right|^2}{\zeta_2^{(i)}}=\frac{\beta\left|\boldsymbol{\nu}_2^{(i)}\bar{\boldsymbol{G}}_1\bar{\boldsymbol{\phi}}^{(i)}\right|^2}{\zeta_2^{(i)}}.
\end{align}

The genetic SNR maximization problem of the $i$th generation is then formulated as follows:
\begin{subequations}
\label{eq:Pm-a3}
\begin{align}
{\bf (Pm-a3) }: \underset{\boldsymbol{\Phi}^{(i)}}{\max}  \quad  &\min\left\{\frac{\left|\boldsymbol{\nu}_1^{(i)}\bar{\boldsymbol{G}}_2\bar{\boldsymbol{\phi}}^{(i)}\right|^2}{\zeta_1^{(i)}},\frac{\left|\boldsymbol{\nu}_2^{(i)}\bar{\boldsymbol{G}}_1\bar{\boldsymbol{\phi}}^{(i)}\right|^2}{\zeta_2^{(i)}}\right\}\nonumber \\
\text{s.t.}\quad
&\left|\phi_n^{(i)}\right|=1,\forall n.\tag{53}
\end{align}
\end{subequations}

The problem can be transformed to the following rank-relaxed SDP problem:
\begin{subequations}
\label{eq:Pm-a4}
\begin{align}
{\bf (Pm-a4) }: \underset{\boldsymbol{\Psi}^{(i)},Q}{\max}  \quad  &Q\nonumber \\
\text{s.t.}\quad
&\boldsymbol{\Psi}^{(i)}[n,n]=1,\forall n, \\
&\boldsymbol{\Psi}^{(i)} \succeq 0,\\
&\text{tr}\left[\boldsymbol{\Psi}^{(i)}\bar{\boldsymbol{G}}_1^H(\boldsymbol{\nu}_2^{(i)})^H\boldsymbol{\nu}_2^{(i)}\bar{\boldsymbol{G}}_1\right] \geq \zeta_2^{(i)}Q,\\
&\text{tr}\left[\boldsymbol{\Psi}^{(i)}\bar{\boldsymbol{G}}_2^H(\boldsymbol{\nu}_1^{(i)})^H\boldsymbol{\nu}_1\bar{\boldsymbol{G}}_2\right] \geq \zeta_1^{(i)}Q.
\end{align}
\end{subequations}
This problem ${\bf (Pm-a4)}$ can be solved by using CVX. In the GSM-OB algorithm, we first initialize RIS phase shift vector $\boldsymbol{\phi}$ and beamforming matrix $\boldsymbol{A}$ with the solution obtained in Algorithm 4. We then calculate $\nu_1^{(i)}$, $\nu_2^{(i)}$, $\zeta_1^{(i)}$, and $\zeta_2^{(i)}$ with the solution obtained in the previous generation, and solve problem $\bf (Pm-a4)$ to update $\boldsymbol{\Psi}$. A rank-one solution is obtained from Algorithm 1 and (\ref{gainphi*}). The update of $\boldsymbol{A}$ is obtained by solving a problem $\bf (Pm-b5)$. We repeat these steps and record the minimum SNR and corresponding solution in each generation. Finally, after completing the predetermined number of generations, we choose the $\boldsymbol{\phi}^{\star}$ and $\boldsymbol{A}^{\star}$ that maximize the minimum SNR as the optimal solution of a GSM-OB algorithm. The detailed steps of the GSM-OB algorithm are summarized in Algorithm 6.

We now describe the GSM-MRB algorithm. In Algorithm 6, i.e., a GSM-OB algorithm, by replacing the seventh step with computing $\boldsymbol{A}$ by (\ref{A_structure}) and (\ref{A_ratio}) and by initializing the starting point in the first step with the solution obtained in the SUM-MRB algorithm, we can implement the GSM-MRB algorithm. Since other steps of the GSM-MRB algorithm are the same as the GSM-OB algorithm, the detailed steps of GSM-MRB algorithm are omitted here.
\begin{algorithm}[t!]
\caption{GSM-OB algorithm to solve $\bf (Po)$}\label{Algorithm2_multi}
\begin{algorithmic}[1]
\STATE Initialize $\boldsymbol{\Phi}$ and $\boldsymbol{A}$ with the solution obtained in Algorithm 4. $I$ is the candidate generation number.
\FOR{$i=1 \to I$}
\STATE Calculate $\boldsymbol{\nu}_1^{(i)}$, $\boldsymbol{\nu}_2^{(i)}$, $\zeta_1^{(i)}$, and $\zeta_2^{(i)}$ by (\ref{nu_12}), (\ref{zeta_12}).
\STATE Solve $\bf (Pm-a4)$ to update $\boldsymbol{\Psi}^{(i)}$.
\STATE Use Algorithm 1 to perform Gaussian randomization procedure for $\boldsymbol{\Psi}^{(i)}$ and update $\bar{\boldsymbol{\phi}}^{(i)}$.
\STATE Obtain $\boldsymbol{\phi}^{(i)}$ from (\ref{gainphi*}).
\STATE Solve $\bf (Pm-b5)$ to update $\boldsymbol{\Xi}^{(i)}$.
\STATE Use Algorithm 1 to perform Gaussian randomization procedure for $\boldsymbol{\Xi}^{(i)}$ and update $\boldsymbol{a}^{(i)}$.
\STATE Recover $\boldsymbol{A}^{(i)}$ from $\boldsymbol{a}^{(i)}$.
\STATE Record the minimum SNR and the corresponding $\boldsymbol{\phi}^{(i)}$ and $\boldsymbol{A}^{(i)}$.
\ENDFOR
\STATE Choose $\boldsymbol{\phi}^{\star}$ and $\boldsymbol{A}^{\star}$ that maximize the minimum SNR.
\STATE Return $\boldsymbol{\phi}^{\star}$ and $\boldsymbol{A}^{\star}$.
\end{algorithmic}
\end{algorithm}

\subsection{Complexity Analysis}
In the SUM-OB algorithm, sub-problem $\bf (Pm-a2)$ is solved and then the sub-problem $\bf (P5-b)$ is solved alternately. To be specific, the iteration number is $\text{log}_2(\frac{Q_{up}-Q_{low}}{\epsilon_1})$ to achieve an accuracy of $\epsilon_1$ for $Q$. Hence, We can obtain the complexity order of the SUM-OB algorithm as $\mathcal{O}\bigg((N+3)^4(N+1)^{\frac{1}{2}}\text{log}(\frac{1}{\epsilon_s})+\text{log}_2(\frac{Q_{up}-Q_{low}}{\epsilon_1})M^9log(\frac{1}{\epsilon_s})\bigg)$. The SUM-MRB algorithm obtains the beamforming matrix with less complexity and the complexity is given by $\mathcal{O}\bigg((N+3)^4(N+1)^{\frac{1}{2}}\text{log}(\frac{1}{\epsilon_s})+2M^3+4M^2(N+1)\allowbreak+2M(N+1)^2\bigg)$. The GSM-OB and GSM-MRB algorithms are genetic algorithms and their computational complexity orders are analyzed as $\mathcal{O}\bigg((1+I)\big((N+3)^4\allowbreak (N+1)^{\frac{1}{2}}\text{log}(\frac{1}{\epsilon_s})+\text{log}_2(\frac{Q_{up}-Q_{low}}{\epsilon_1})M^9log(\frac{1}{\epsilon_s})\big)\bigg)$ and $\mathcal{O}\bigg((1+I)\big((N+3)^4(N+1)^{\frac{1}{2}}\text{log}(\frac{1}{\epsilon_s})+2M^3\allowbreak+4M^2(N+1)+2M(N+1)^2\big)\bigg)$, respectively.
\begin{remark}
For the multiple-antenna BS case, we propose four algorithms, namely the SUM-OB, SUM-MRB, GSM-OB, and GSM-MRB algorithms. Their relationships can be summarized in Table \ref{table_Proposed algorithms}. The SUM-OB and SUM-MRB algorithms are the one-step algorithms, whereas the GSM-OB and GSM-MRB algorithms are the genetic algorithms. From the complexity analysis, we can conclude that the SUM-MRB algorithm requires the least computational complexity, whereas the GSM-OB algorithm can achieve the best performance, as numerically verified in the next section. There exists a performance-and-complexity tradeoff between the proposed algorithms.
\end{remark}

\section{Simulation Results}\label{secsimulation}
In this section, simulation results are provided to validate the effectiveness of the proposed algorithms. In the simulations, we consider a TWRN, in which the BS is equipped with four antennas and the RIS is a uniform rectangular array with $10 \times 10$ reflective elements. We set $d_{1,\text{R}}=40\  \text{m}$, $d_{2,\text{R}}=60\  \text{m}$, $d_{\text{B,R}}=80\  \text{m}$, $d_{\text{B},1}=\sqrt{d_{1,\text{R}}^2+d_{\text{B,R}}^2}$ and $d_{\text{B},2}=\sqrt{d_{2,\text{R}}^2+d_{\text{B,R}}^2}$.
The system parameters are nearly identical to those in \cite{kammoun2020asymptotic}. Specifically, the path loss is set according to the 3GPP Urban Micro (UMi) scenario from \cite{access2010further} with a carrier frequency of 2.5 GHz. The path loss is set as follows:
\begin{align}
\beta(d) [\text{dB}]=\left\{
\begin{array}{rcl}
G_t+G_r-35.95-22\text{log}_{10}(d),       &      & \text{for an LoS channel,}\\
G_t+G_r-33.05-36.7\text{log}_{10}(d),       &      & \text{for an NLoS channel,}
\end{array} \right.
\end{align}
where $G_t$ and $G_r$ denote the corresponding antenna gains (in dBi) at the transmitter and receiver, respectively. We assume that the BS and RIS have a gain of 5 dBi and the users have a gain of 0 dBi. The bandwidth $B=180$ kHz, and the noise power  $\sigma^2=-174+10\text{log}_{10}(B) $ dBm. We set the Rician factor $K_v=K_1=K_2=10$. The center azimuth AoA is chosen randomly from $[-\pi,\pi]$, whereas the elevation AoA is chosen randomly from $[-25^\circ,25^\circ]$. Simulation results are based on $10^3$ random channel realizations.
\subsection{RIS for Single-Antenna BS}
The minimum SNRs of the proposed algorithms are evaluated when $P_S=0$ dBm by varying the transmit power budget of BS with a single antenna. For comparison purpose, the following two benchmark schemes are considered.
\begin{itemize}
\item Benchmark 1 (RIS: random phase): The phase shifts are randomly chosen from $[0,2\pi]$ and the power amplification parameter $\tau$ with which the BS amplifies its received signal is given by (\ref{obtain_tau}).
\item Benchmark 2 (No RIS): A TWR system without RIS is considered and the power amplification parameter $\tau$ is given by $\tau=\frac{P_B}{P_S|h_1|^2+P_S|h_2|^2+\sigma^2}$.
\end{itemize}

\begin{figure}[t]
 \centering
 \begin{minipage}[t]{0.5\textwidth}
  \centering
  \includegraphics[width=3in]{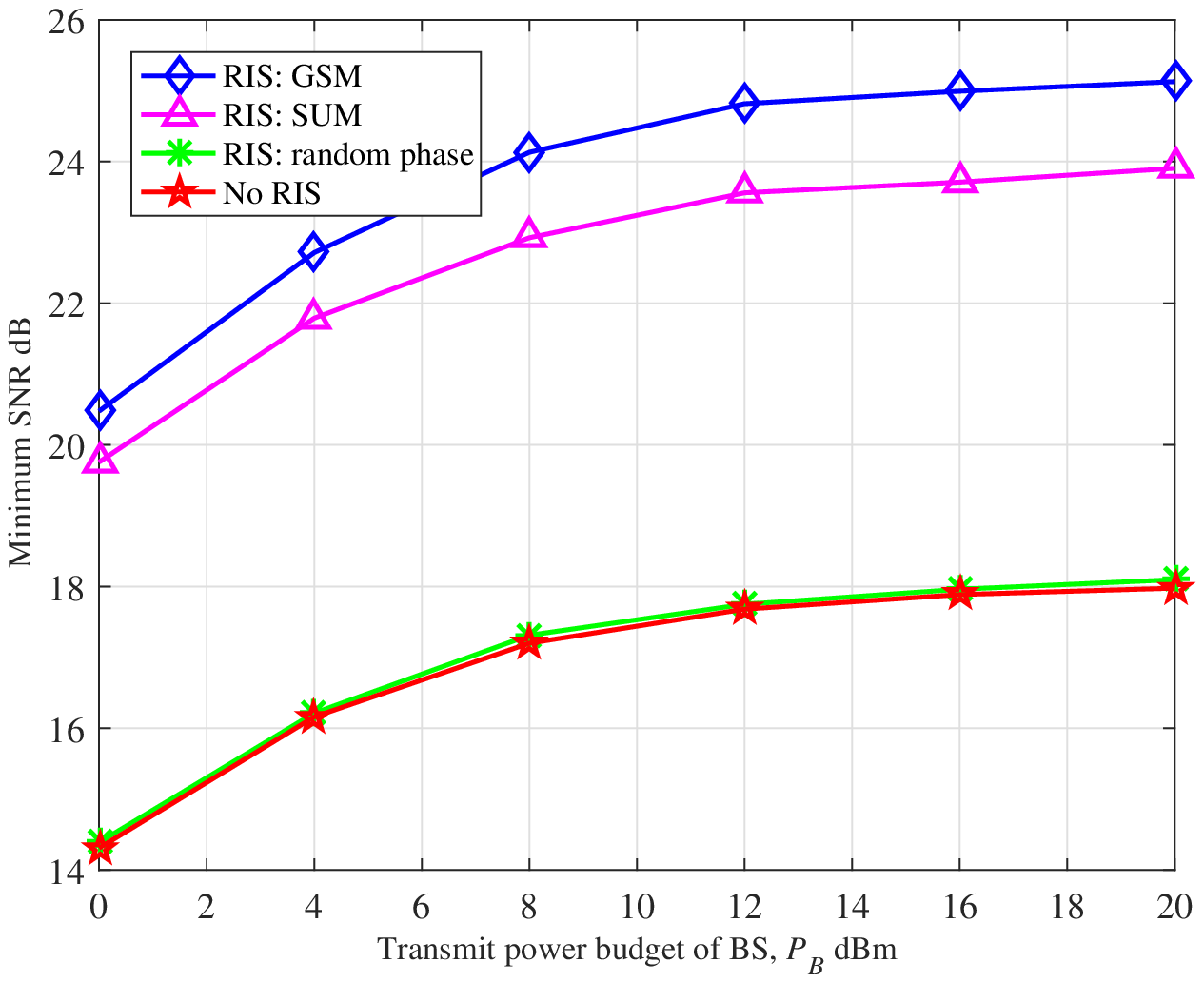}
  \caption{Minimum SNR over transmit power budget of single-antenna BS when $P_S=0$ dBm.}
  \label{fig:Fig2}
 \end{minipage}
 \begin{minipage}[t]{0.48\textwidth}
  \centering
  \includegraphics[width=3in]{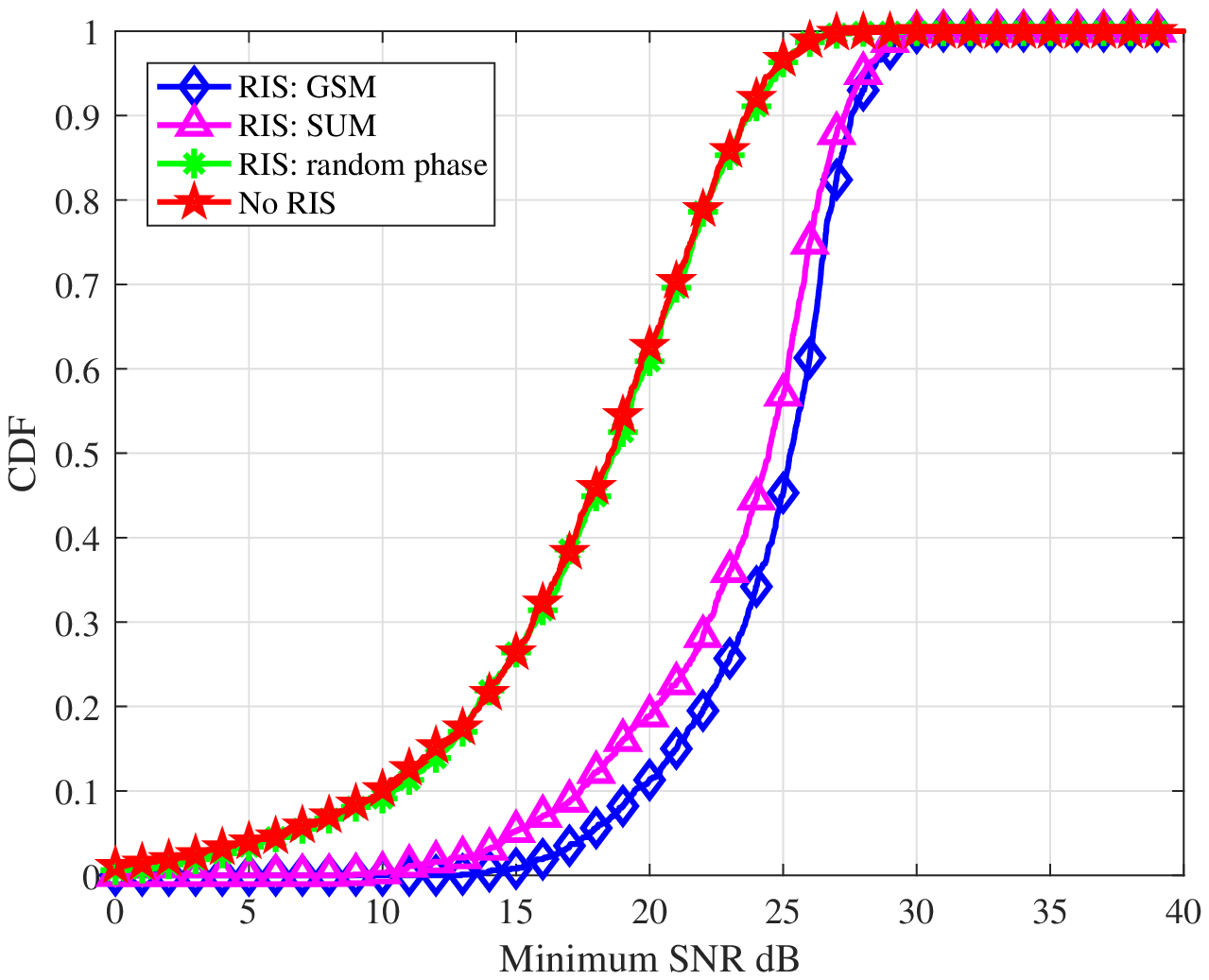}
  \caption{CDF of minimum SNR for different algorithms when $P_B=10$ dBm and $P_S=0$ dBm.}
  \label{fig:Fig3}
 \end{minipage}
\end{figure}

\begin{figure}[t]
\centering
\includegraphics[width=3in] {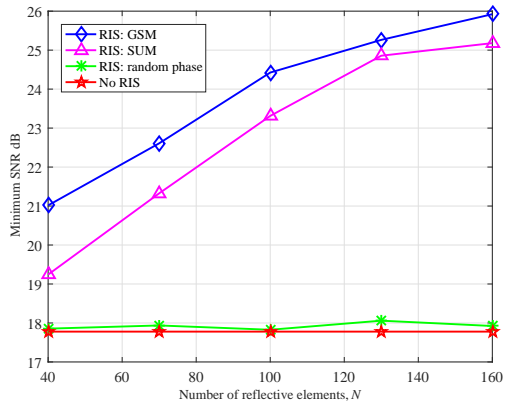}
\caption{Minimum SNR over the number of the reflective elements for the single-antenna BS case when $P_B=10$ dBm and $P_S=0$ dBm.}
\label{fig:Fig4}
\end{figure}

Fig. \ref{fig:Fig2} shows users' minimum SNR versus the transmit power budget for different algorithms. It is observed that the minimum SNR increases and is saturated as the transmit power budget increases. The minimum SNR saturation is owing to the extant additive noise. Besides, we observe that the performance gain by deploying the RIS is negligible if the phase shifts are randomly chosen. Also, the proposed SUM and GSM algorithms can achieve approximately 5 dB and 6 dB gain, respectively, compared to the benchmark schemes. As expected, the GSM algorithm outperforms the SUM algorithm at the cost of the computational complexity.

Fig. \ref{fig:Fig3} shows the \emph{cumulative distribution function} (CDF) of the minimum SNR for different algorithms when $P_B=10$ dBm and $P_S=0$ dBm. We can take the CDF as the user success probability and the minimum SNR as the users' target SNR. Here, the users' target SNR means both users can decode their information correctly if their real SNR is higher than this target SNR. The user success probability is referred to as the probability that the users can meet the SNR constraints. When the minimum SNR takes the value of 20 dB, for example, the corresponding CDF value is 0.114, 0.188, 0.614, and 0.625 for the GSM, SUM, random phase RIS, and No RIS algorithms, respectively, which means that $11.4\%, 18.8\%, 61.4\%$, and $62.5\%$ minimum SNRs obtained from the algorithms are lower than the target SNR, 20 dB. This validates the performance advantage of the proposed SUM and GSM algorithms. Moreover, the performance gains obtained from the proposed SUM and GSM algorithms are stable according to the CDF curve and consistent with the results in Fig. \ref{fig:Fig3}, which shows the superiority of the proposed algorithms.

Fig. \ref{fig:Fig4} plots users' minimum SNR across the number of reflective elements, i.e., $N$, where $P_B=10$ dBm and $P_S=0$ dBm. Here, we set $N_h=10$ and $N_v=4,7,10,13,16$. As expected, we observe that the minimum SNR increases as $N$ increases for the SUM and GSM algorithms. The performance gap between the proposed algorithms and benchmark schemes becomes larger as $N$ increases.

\subsection{RIS for Multiple-Antenna BS}
The proposed algorithms for multi-antenna BS, namely SUM-OB, SUM-MRB, GSM-OB, and GSM-MRB, are compared to the following benchmark schemes:
\begin{itemize}
\item Benchmark scheme 1 (No RIS): Without the RIS deployed and we only optimize the beamforming matrix at BS. The beamforming matrix is obtained by (\ref{A_structure}) and (\ref{A_ratio}).
\item Benchmark scheme 2 (RIS: random phase-MRB): The phase shifts are randomly chosen, whereas the beamforming matrix is obtained by (\ref{A_structure}) and (\ref{A_ratio}).
\end{itemize}

\begin{figure}[t]
 \centering
 \begin{minipage}[t]{0.5\textwidth}
  \centering
  \includegraphics[width=3in]{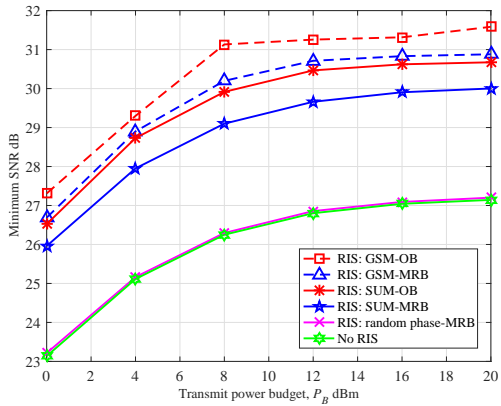}
  \caption{Minimum SNR over transmit power budget for various algorithms for the multiple-antenna BS case when $P_S=0$ dBm.}
  \label{fig:Fig5}
 \end{minipage}
 \begin{minipage}[t]{0.48\textwidth}
  \centering
  \includegraphics[width=3in]{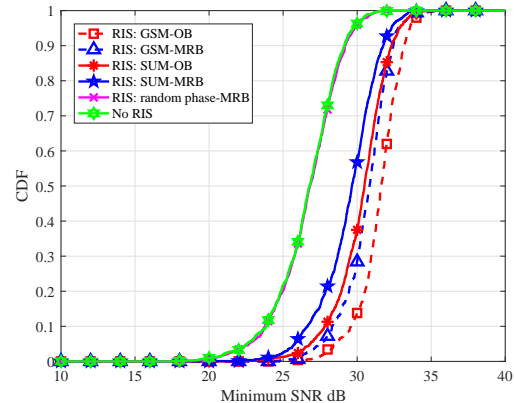}
  \caption{CDF of minimum SNR for different algorithms for the multiple-antenna BS case when $P_B=10$ dBm and $P_S=0$ dBm.}
  \label{fig:Fig6}
 \end{minipage}
\end{figure}

\begin{figure}[t]
 \centering
 \begin{minipage}[t]{0.5\textwidth}
  \centering
  \includegraphics[width=3in]{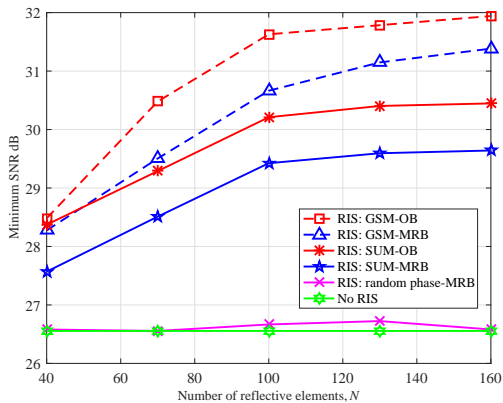}
  \caption{Minimum SNR over the number of the reflective elements when $P_B=10$ dBm and $P_S=0$ dBm.}
  \label{fig:Fig7}
 \end{minipage}
 \begin{minipage}[t]{0.48\textwidth}
  \centering
  \includegraphics[width=3in]{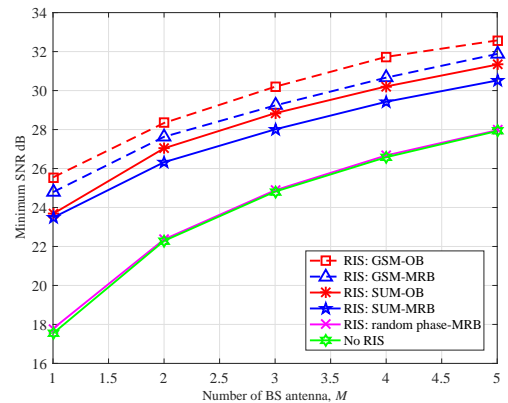}
  \caption{Minimum SNR over the number of the BS antennas, where $P_B=10$ dBm and $P_S=0$ dBm.}
  \label{fig:Fig8}
 \end{minipage}
\end{figure}

Fig. \ref{fig:Fig5} shows users' minimum SNR across the transmit power budget for various algorithms. It is observed that the minimum SNR increases up to a certain level as $P_B$ increases. As expected, the performances of the proposed algorithms outperform the benchmark schemes. Specifically, the SUM-MRB, SUM-OB, GSM-MRB, and GSM-OB algorithms can achieve the gain of approximately 2.9 dB, 3.5 dB, 3.7 dB, and 4.3 dB compared to the benchmark scheme without RIS. The GSM algorithms provide better performance compared to the SUM algorithms. The computational complexity reduction from MRB causes minimum SNR degradation marginally by approximately 0.7 dB, compared to the OB algorithm.

Fig. \ref{fig:Fig6} illustrates the CDF of the minimum SNR for different algorithms, similarly to the single-antenna BS case when $P_B=10$ dBm and $P_S=0$ dBm. It is observed that, for the same user success probability, two users can decode information with a much stringent SNR constraint with the proposed algorithms, compared to the benchmark schemes. The performance gains of the proposed algorithms are also stable according to the CDF curve, which means that the proposed algorithms can perform well with high probability.

Fig. \ref{fig:Fig7} shows users' minimum SNR over the number of reflective elements $N$ for the multiple-antenna BS case when $P_B=10$ dBm and $P_S=0$ dBm. The minimum SNRs for the proposed algorithms increase as $N$ increases, whereas the minimum SNRs of the benchmark schemes remain unchanged.

Fig. \ref{fig:Fig8} shows users' minimum SNR over the number of BS antennas, $M$, when $P_B=10$ dBm and $P_S=0$ dBm. As expected, the minimum SNR increases as $M$ increases for all schemes since the beamforming gain of BS increases. Here, it should be emphasized that the proposed algorithms outperform the benchmark schemes irrespective of $M$.

\section{Conclusions}\label{secconclusion}
 In this paper, an RIS-assisted TWRN was investigated and a joint beamforming and RIS design problem was formulated to maximize the minimum SNR under the transmit power constraint at the BS. The single-antenna BS case was first considered and addressed by devising the SUM and GSM algorithms. The optimization problem was then divided into two subproblems to design the phase shift and beamforming matrices for the case with a multiple-antenna BS. The RIS phase shift matrix was obtained by employing SUM or GSM method while the BS beamforming matrix was obtained by using OB or MRB method. Simulation results demonstrate that the proposed algorithms can achieve significant performance gains compared to the benchmark schemes, which validates the benefits of the RIS in TWRN.

\useRomanappendicesfalse
\appendices
\section{Proof of Proposition \ref{Proposition:Prop1}}\label{app:prop1}

\begin{proof}
From the Holder's inequality, (14) is bounded as follows:
\begin{align}
\gamma_1\leq  \frac{\beta\left\|(\boldsymbol{h}_1+\boldsymbol{V\Phi}_2\boldsymbol{g}_1)^T\boldsymbol{A}\right\|^2\left\|\boldsymbol{h}_2+\boldsymbol{V\Phi}_1\boldsymbol{g}_2\right\|^2}{\left\|(\boldsymbol{h}_1+\boldsymbol{V\Phi}_2\boldsymbol{g}_1)^T\boldsymbol{A}\right\|^2+1}=\frac{\beta\left\|\boldsymbol{h}_2+\boldsymbol{V\Phi}_1\boldsymbol{g}_2\right\|^2}{1+\frac{1}{\left\|(\boldsymbol{h}_1+\boldsymbol{V\Phi}_2\boldsymbol{g}_1)^T\boldsymbol{A}\right\|^2}}\tag{A1}
\end{align}
Here, the equality holds when $\boldsymbol{A}^H\hat{\boldsymbol{h}}_1^*$ is parallel to $\hat{\boldsymbol{h}}_2$, i.e.,
\begin{align}
    \boldsymbol{A}^H\hat{\boldsymbol{h}}_1^*=\mu \hat{\boldsymbol{h}}_2,\tag{A2}
\end{align}
where $\mu$ is a scalar, $\hat{\boldsymbol{h}}_1\triangleq\boldsymbol{h}_1+\boldsymbol{V\Phi}_2\boldsymbol{g}_1$, and $\hat{\boldsymbol{h}}_2\triangleq\boldsymbol{h}_2+\boldsymbol{V\Phi}_2\boldsymbol{g}_2$. Substituting the beamforming matrix structure $\boldsymbol{A}$ in \cite{liang2008optimal} into the equation, we obtain
\begin{align}
    (\hat{\boldsymbol{h}}_2\hat{\boldsymbol{h}}_1^T+\hat{\boldsymbol{h}}_1\hat{\boldsymbol{h}}_2^T)\hat{\boldsymbol{h}}_1^*= \mu \hat{\boldsymbol{h}}_2.\tag{A3}
\end{align}
The first term of the left-hand side in (A3) is scaled $\hat{\boldsymbol{h}}_2$ and the second term is  scaled $\hat{\boldsymbol{h}}_1$. The equality holds when $\hat{\boldsymbol{h}}_1 \perp \hat{\boldsymbol{h}}_2$ or $\hat{\boldsymbol{h}}_1 \parallel \hat{\boldsymbol{h}}_2$. $\hat{\boldsymbol{h}}_1$ and $\hat{\boldsymbol{h}}_2$ can be orthogonal when the number of the reflective elements is large. Similarly, we have
\begin{align}
\gamma_2 \leq \frac{\beta\left\|\boldsymbol{h}_1+\boldsymbol{V\Phi}_1\boldsymbol{g}_1\right\|^2}{1+\frac{1}{\left\|(\boldsymbol{h}_2+\boldsymbol{V\Phi}_2\boldsymbol{g}_2)^T\boldsymbol{A}\right\|^2}}.\tag{A4}
\end{align}
It is observed that $\boldsymbol{\Phi}_1$ and $\boldsymbol{\Phi}_2$ are decoupled in the SNR term and that their optimization can be split into two subproblems. For $\boldsymbol{\Phi}_1$, the larger $\left\|\boldsymbol{h}_2+\boldsymbol{V\Phi}_1\boldsymbol{g}_2\right\|^2$ makes $\gamma_1$ larger. Therefore, the optimal $\boldsymbol{\Phi}_1$ is the solution of the problem which
maximizes the minimum of $\left\|\boldsymbol{h}_1+\boldsymbol{V\Phi}_1\boldsymbol{g}_1\right\|^2$ and $\left\|\boldsymbol{h}_2+\boldsymbol{V\Phi}_1\boldsymbol{g}_2\right\|^2$. We have
\begin{align}
\left\|(\boldsymbol{h}_1+\boldsymbol{V\Phi}_2\boldsymbol{g}_1)^T\boldsymbol{A}\right\|^2=&\left\|\hat{\boldsymbol{h}}_1^T\boldsymbol{A}\right\|^2=\left\|\alpha\hat{\boldsymbol{h}}_1^T(\hat{\boldsymbol{h}}_2^*\hat{\boldsymbol{h}}_1^H+\hat{\boldsymbol{h}}_1^*\hat{\boldsymbol{h}}_2^H)\right\|^2\nonumber\\
=&\alpha^2\left\|\hat{\boldsymbol{h}}_1^T\hat{\boldsymbol{h}}_2^*\hat{\boldsymbol{h}}_1^H+\hat{\boldsymbol{h}}_1^T\hat{\boldsymbol{h}}_1^*\hat{\boldsymbol{h}}_2^H\right\|^2 \geq \alpha^2 |\hat{\boldsymbol{h}}_1^T\hat{\boldsymbol{h}}_1^*|^2\left\|\hat{\boldsymbol{h}}_2\right\|^2\nonumber\\
=&\alpha^2 |\hat{\boldsymbol{h}}_1^T\hat{\boldsymbol{h}}_1^*|^2\left\|\boldsymbol{h}_2+\boldsymbol{V\Phi}_2\boldsymbol{g}_2\right\|^2.\tag{A5}
\end{align}
Therefore, for $\boldsymbol{\Phi}_2$, the larger $\left\|\boldsymbol{h}_2+\boldsymbol{V\Phi}_2\boldsymbol{g}_2\right\|^2$ makes $\gamma_1$ larger. The optimal $\boldsymbol{\Phi}_2$ is also the solution of the problem which maximizes the minimum of $\left\|\boldsymbol{h}_1+\boldsymbol{V\Phi}_2\boldsymbol{g}_1\right\|^2$ and $\left\|\boldsymbol{h}_2+\boldsymbol{V\Phi}_2\boldsymbol{g}_2\right\|^2$. Therefore, the optimal RIS phase shift matrices in the first and second phases are identical to each other, i.e., $\boldsymbol{\Phi}_1=\boldsymbol{\Phi}_2$.
\end{proof}

\bibliography{reference}
\bibliographystyle{IEEEtran}
\end{document}